\theoremstyle{plain}
\newtheorem{thm}{Theorem}[section]
\newtheorem{theorem}{Theorem}[section]
\newtheorem{lemma}[thm]{Lemma}
\theoremstyle{definition}
\newtheorem{defn}{Definition}[section]
\newtheorem{definition}{Definition}[section]
\newtheorem{remark}[defn]{Remark}
\newtheorem{example}[defn]{Example}
\theoremstyle{remark}
\numberwithin{equation}{section}
\numberwithin{figure}{section}
\newcommand\dee{\partial}
\DeclareMathOperator{\re}{Re} \DeclareMathOperator{\im}{Im}
\def\I{\mathrm{i}}
\def\D{{\mathbb D}}
\def\C{{\mathbb C}}
\begin{document}

\title{
Vortex pairs and dipoles on closed surfaces}
\author{
Bj\"orn Gustafsson\textsuperscript{1},
}

\date{\today}

\maketitle

\tableofcontents

\begin{abstract} 
We set up general equations of motion for point vortex systems on closed Riemannian surfaces, allowing for the case that 
the sum of vorticities is not zero and there hence must be counter-vorticity present.  
The dynamics of global circulations which is coupled to the dynamics of the vortices is carefully taken into account.

Much emphasis is put to the study of vortex pairs, having the Kimura conjecture in focus. This says that vortex pairs 
move, in the dipole limit, along geodesic curves, and proofs for it have previously been given by S.~Boatto and J.~Koiller
by using Gaussian geodesic coordinates. In the present paper we reach the same conclusion by following a slightly
different route, leading directly to the geodesic equation with a reparametrized time variable. 

In a final section we explain how vortex motion in planar domains can be seen as a special case of vortex motion on closed surfaces,
and in two appendices we give some necessary background on affine and projective connections.
 \end{abstract}

\noindent {\it Keywords:} Point vortex, vortex pair, vortex dipole, geodesic curve, affine connection, projective connection, Robin function, 
Hamiltonian function, symplectic form, Green function.

\noindent {\it MSC:} 76B47, 53B05, 53B10, 30F30 

\noindent {\it Acknowledgements:} The author has been inspired by discussions with Stefanella Boatto, Jair Koiller, 
Nikolai Nadirashvili, and Jesper G\"oransson.

 \footnotetext[1]
{Department of Mathematics, KTH, 100 44, Stockholm, Sweden.\\
Email: \tt{gbjorn@kth.se}}


\section{Introduction}

This paper extends previous results in \cite{Gustafsson-2019} on multiple point vortex motion on closed Riemannian surfaces of 
arbitrary genus to cases in which the sum of the vorticities is not zero, and therefore
a counter-vorticity must be present. This occurs for example when there is only one point vortex. For reasons
dictated by a Hodge decomposition, the counter-vorticity is naturally chosen to be uniformly distributed on the surface.

Special for our investigations is that we, in the case of higher genus, carefully take the circulation around the holes into account
and make sure that the Kelvin law of preservation of circulations is satisfied. This was done already in \cite{Gustafsson-2019},
but in the case of a counter-vorticity being present one has to take some extra steps. It is natural to choose fixed closed curves on 
the surface to make up a basis for the first homology group, but 
the circulation around such curves will in general not be conserved in time.
In the present paper we handle the so arising problem by considering
these circulations as free variables in the phase space, in addition to the locations of the vortices. If there are $n$ vortices
and the genus of the surface is $\texttt{g}$ the phase space will thus have real dimension $2n+2\texttt{g}$.

The Hamiltonian function will, as usual, be a renormalized kinetic energy for the flow, see (\ref{H}), but in the presence of a counter-vorticity
the symplectic form has to be accordingly adapted, see (\ref{symplectic form}). With this done the dynamical equations come out to be the 
expected ones (Theorem~\ref{thm:dynamics}). 
After having set up the vortex dynamics in general (Section~\ref{sec:hamiltonian} and \ref{sec:hamiltoneq}) 
and after a short discussion of the single vortex case in genus zero (Section~\ref{sec:single vortex}), 
we turn in Section~\ref{sec:vortex pairs} to the question of motion of vortex pairs, i.e. systems 
consisting of two point vortices of equal strength but rotating in opposite direction. In this case there is no counter-vorticity. 
The conjecture  of Y.~Kimura \cite{Kimura-1999}, saying
that a vortex pair in the dipole limit moves along a geodesic curve, has been a major source of inspiration for the present paper. 
The same applies for papers  \cite{Koiller-Boatto-2009, Boatto-Koiller-2013} by S.~Boatto, J.~Koiller.
These latter papers actually contain a proof of Kimura's conjecture, and in the present paper we try to clarify the situation further
by connecting the dipole trajectory directly to the geodesic equation associated to the metric.

Throughout this paper we think of a Riemannian surface (i.e. a Riemannian manifold of dimension two) as a Riemann surface provided with
a metric which is compatible with the conformal structure. Therefore we are lead into more or less classical function theory on Riemann surfaces.
Our treatment is based entirely on local holomorphic coordinates. For example, the logarithmic singularities in the stream function will look like
$\log |z-w|$ near a vortex, $z$ and $w$ being complex coordinates, $w$ representing the location of a vortex. When such a singular term is removed 
from a local series expansion (with respect to $z$) around $w$ in the necessary renormalization process and $z$ is set equal to $w$, 
then the quantity which remains behaves no longer  as a function under changes of coordinates.
Instead it is a kind of ``connection''.
For such reasons we run into differential geometric considerations involving different kinds of connections: affine connections 
(or $1$-connections) and projective, or Schwarzian, connections ($2$-connections). The metric itself can be viewed as the exponential
of a $0$-connection. The necessary definitions and background material are summarized in two
appendices, Sections~\ref{sec:connections} and \ref{sec:singular parts}. 
Affine connections appear in many areas of mathematical physics, and as a
mathematical tool they show up in expressions for covariant derivatives, and 
in particular in the equation (\ref{imgeodesics}) for geodesic curves. Also projective connections are regularly used, for example in
conformal field theory. 

The Kimura conjecture and other matters related to dipole motion has been discussed also in  
\cite{Llewellyn-2011, Llewellyn-Nagem-2013, Koiller-2018, Koiller-2020, Krishnamurthy+-2021}, and from slightly different points of view in
 \cite{Chorin-1973, Kulik-Tur-Yanovsky-2010, Holm+-2017, Cawte+-2019}.
As mentioned, the analysis in the present paper leads to a new confirmation of Kimura's conjecture, although in a rather weak form:
expanding all quantities in Taylor series and keeping only the leading terms, the dynamical equations for a vortex pair
reduce to the geodesic equation in the dipole limit. See Theorem~\ref{thm:kimura}.  It seems likely that stronger forms of the conjecture
(convergence of the trajectories, for example) can be formulated and proved, but we leave such matters for possible future investigations.

It may be remarked that Kimura's conjecture is counter-intuitive. The reason to think so is that vortex motion in general is governed by
global laws on the manifold, like the structure of Green functions and other harmonic functions, whereas the geometry of
geodesics is an entirely local matter. If the geometry of the manifold is changed at one place then this will not affect what geodesics
look like at another place. It will however change the structure of Green functions and the general vortex dynamics everywhere. 
The solution of this paradox is that
vortex dipoles are highly singular. A vortex pair is partly governed by the global harmonic structure, but in the limit, when the
vortex pair becomes a dipole, this harmonic structure is completely overruled by the differential geometric structure.  In that limit,
all terms in the dynamical equations containing harmonic functions become negligible compared to those terms depending
on the metric only. See Section~\ref{sec:vortex pairs}. 

In Section~\ref{sec:dipoles} we make some attempts to treat vortex dipoles directly by starting from the dynamical equation for a single
vortex and taking the distributional derivative with respect to the coordinate $w$ which gives the location of the vortex. 
The local stream function for the flow near a single vortex at $w$ is something like (we ignore constant factors below)
$$
\psi(z)=\log |z-w|+ \text{regular terms},
$$
and the corresponding flow, when considered as a one-form (or covariant vector field) looks like 
$$
\nu(z)=\frac{dz}{z-w} + \text{complex conjugate} + \text{regular terms}.
$$
The term named ``complex conjugate'' is inserted to make possible for $\nu$ to be real-valued.
For the dipole we accordingly have
$$
d_w\nu(z)=\frac{dz dw}{(z-w)^2} +\text{complex conjugates} + \text{regular terms}.
$$
The differential $dw$ shall then be thought of as containing information of the orientation of the dipole.

The constant term (i.e. the first term in the Taylor series of the regular terms) are in the above three expressions a $0$-connection, a $1$-connection, and
a $2$-connection, respectively (again up to constant factors). See Lem\-ma~\ref{lem:connections} for a precise statement. The $0$-connection for the 
stream function is the Routh's stream function, or Robin function when $\psi$ is thought of as a Green function. And the $1$-connection for $\nu$ is
what gives the speed of the vortex, after subtraction (in case of a curved manifold) of the affine connection coming from the metric. The result
will then be a covariant vector field.

For $d_w\nu(z)$ the constant term is similarly a $2$-connection, or projective connection, but it is more unclear what influence it has on the 
motion of the dipole. Indeed, the dipole singularity is very strong and it has a definite direction at $w$, and it seems that all finite terms should
be negligible in comparison.   Therefore it is difficult to figure out any dynamics from this picture, only that the dipole should move with
infinite speed in the direction dictated by the singularity, presumably (and as have been confirmed) along a geodesic with respect to the metric.

In the last section (not counting the appendices) of the paper we discuss briefly how point vortex motion in planar domains can be considered as a special case of
vortex motion on surfaces. This is done by doubling the domain to a compact Riemann surface (the Schottky double), which is a standard
tool as far as the conformal structure is concerned. What is special in our case is that we have to take the
metric structure into account, and this becomes non-smooth in the doubling procedure. An interesting observation
is that the boundary of the domain becomes a geodesic curve with respect to the natural planar metric (on each of the halves)
of the Schottky double.  

In general the present paper is, in addition to the papers by S.~Boatto and J.~Koiller already mentioned,
much in spirit of papers \cite{Llewellyn-2011, Llewellyn-Nagem-2013, Grotta-2017, Grotta-Ragazzo-Barros-Viglioni-2017, Wang-2021}
by S.~G.~Llewellyn Smith, R.~J.~Nagem, C.~Grotta Ragazzo, H.~Vig\-lioni, and Q.~Wang.
The paper \cite{Bogatskiy-2019}  by A.~Bogatskiy
contains ideas concerning the higher genus case which are somewhat similar to those in the present paper, 
see in particular Section~A.3 in \cite{Bogatskiy-2019}.
We wish to mention also the work \cite{Borisov-Mamaev-2006} by A.~V.~Borisov and I.~S.~Mamaev, and the very early article 
\cite{Fridman-Polubarinova-1928} (cited in \cite{Borisov-Mamaev-2006, Llewellyn-Nagem-2013}) by A.~A.~Fridman and
P.~Ya.~Polubarinova. That work, from 1928, may be one of the first papers on motion of dipoles and other higher singularities.
Fridman was a famous cosmologist and Pelageya Yakolevna Polubarinova (or Polubarinova-Kochina, after marriage) was a young 
(at that time) disciple of Fridman. This remarkable woman wrote her first paper in 1924, and was scientifically active
throughout her life. Her published papers and books span a period of 75 years, the last paper appearing the same 
year as she died, one hundred years of age. See \cite{Vasiliev-2009} for a short biography related to her work on Hele-Shaw flows.


\section{Fluid dynamics on a Riemannian surface}\label{sec:general}

\subsection{General notations and assumptions}

We consider the dynamics of a non-viscous incompressible fluid on a compact Riemannian manifold
of dimension two. We follow the treatments in \cite{Schutz-1980, Arnold-Khesin-1998, Frankel-2012} 
in the respects of treating the fluid velocity field as a one-form and in extensively using the Lie derivative.
These sources also provide the standard notions and notations for differential geometry to be used.
Other treatises in fluid dynamics,  suitable for our purposes, are \cite{Marchioro-Pulvirenti-1994, Newton-2001}.

Traditionally, non-viscous fluid dynamics is discussed in terms of the fluid velocity vector field ${\bf v}$, the density $\rho$, and the pressure $p$ of the fluid.
The basic equations are the equation of continuity (expressing conservation of mass) and Euler's equation (conservation of momentum):
\begin{equation}
\frac{\partial \rho}{\partial t}+\nabla\cdot (\rho{\bf v})=0,
\label{Euler1}
\end{equation}
\begin{equation}
\frac{\partial {\bf v}}{\partial t}+({\bf{v}}\cdot\nabla) {\bf v}=-\frac{1}{\rho}\nabla p.
\label{Euler2}
\end{equation}
These are to be combined with a constitutive law giving a relationship between $p$ and $\rho$. 

The above equations are already simplified versions of the general Navier-Stokes equations (which allow for viscosity terms), 
but we shall simplify further by working only in two dimensions and by taking the constitutive law to be the simplest possible:
\begin{equation}\label{incompressibility}
\rho=1.
\end{equation}
Thus $\rho$ disappears from discussion, and the equation of continuity becomes
\begin{equation}\label{nablau}
\nabla\cdot {\bf v}=0.
\end{equation}

One can get rid also of $p$, because when $\rho$ is constant
then $p$ appears only as a scalar potential in Euler's equation, this equation effectively saying that
\begin{equation}
\frac{\partial {\bf v}}{\partial t}+({\bf{v}}\cdot\nabla) {\bf v}=\nabla ({\rm something}),
\label{euler1}
\end{equation}
where this ${\rm ``something"}=p$\, afterwards can be recovered, up to a (time dependent) constant.

So everything is extremely simple (in theory), and even more so in two dimensions. As is well-known \cite{Schiffer-Spencer-1954},
every (oriented) Riemannian manifold of dimensions two can be made into a Riemann surface
by choosing isothermal local coordinates $(x,y)=(x^1,x^2)$, by which $z=x+\I y$ becomes a holomorphic coordinate
and the metric takes the form
\begin{equation}
ds^2 =\lambda(z)^2 |dz|^2=\lambda (z)^2(dx^2+dy^2)
\label{metric}
\end{equation} 
with $\lambda>0$. Thus the metric tensor $g_{ij}$, as appearing in general in $ds^2=g_{ij}dx^i dx^j$, becomes
$g_{ij}=\lambda^2\delta_{ij}$ in these coordinates. It turns out to be convenient to work with the fluid velocity $1$-form
(or covariant vector)
$$
\nu= \nu_x \,dx+\nu_y\,dy
$$ 
rather than with the corresponding vector field, which then becomes
$$
{\bf v}=\frac{1}{\lambda^2} (\nu_x \,\frac{\partial}{\partial x}+\nu_y \,\frac{\partial}{\partial y}).
$$

The Hodge star operator takes $0$-forms into $2$-forms and vice versa, and takes $1$-forms to $1$-forms. 
It acts on basic differential forms as
$$
*1=\lambda^2 (dx\wedge dy)= {\rm vol}=\text{the volume (area) form},
$$
$$
*dx=dy, \quad *dy=-dx,\quad *(dx\wedge dy)=\lambda^{-2}.
$$
Thus, for $1$-forms,
$$
*\nu =-\nu_y\,dx +\nu_x\,dy,
$$
which can be interpreted as a rotation by ninety degrees of the corresponding vector.

In addition to the Hodge star it is useful to introduce the Lie derivative $\mathcal{L}_{\bf v}$ of a vector field ${\bf v}$.
When acting on differential forms this is related to interior derivation (or ``contraction'') $i({\bf v})$ 
and exterior derivation $d$ by the homotopy formula
\begin{equation}\label{cartan}
\mathcal{L}_{\bf{v}}=d\circ i({\bf{v}})+  i({\bf{v}})\circ d.
\end{equation}
The Hodge star and interior derivation are related by
\begin{equation}\label{Hodgei}
i({\bf v}) {\rm vol}=*\nu,
\end{equation}
where ${\bf v}$ and $\nu$ are linked via the metric tensor as above. Thus
$$
d*\nu=d(i({\bf v}){\rm vol})=\mathcal{L}_{\bf v}({\rm vol})= (\nabla\cdot {\bf v}) {\rm vol}.
$$
See \cite{Frankel-2012} for this identity, 
and for further details in general. We conclude that (\ref{nablau}) is equivalent to the statement that $*\nu$ is a closed form:
\begin{equation}\label{dstarnu}
d*\nu=0.
\end{equation} 
Locally we can therefore write
\begin{equation}
*\nu=d\psi
\label{nudpsi}
\end{equation}
for some (locally defined) {\it stream function} $\psi$.
The {\it vorticity} $2$-{\it form} for a fluid is, in terms of the flow $1$-form $\nu,$
\begin{equation}\label{omegadnu}
\omega=d\nu= (\frac{\partial \nu_y}{\partial x}-\frac{\partial \nu_x}{\partial y}) \,dx\wedge dy.
\end{equation}

The Euler equation (\ref{Euler2}) can be written
\begin{equation}\label{euler}
(\frac{\partial }{\partial t}+\mathcal{L}_{\bf v})({\bf \nu})=d(\frac{1}{2}|{\bf v}|^2-{p}).
\end{equation}
Note that the left member involves the fluid velocity both as a vector field and as a form. The combination
$\frac{\partial }{\partial t}+\mathcal{L}_{\bf v}$
can be viewed as a counterpart, for forms, to the more traditional ``convective derivative'' (or material derivative) 
$\frac{\partial }{\partial t}+{\bf{v}}\cdot\nabla$, implicitly used in (\ref{euler1}), 
for vector fields. However, the two derivatives are not the same.
The equivalence between (\ref{euler}) and (\ref{Euler2}) (when $\rho=1$) is a consequence of the identity
$$
\mathcal{L}_{\bf v}\nu=d(\frac{1}{2}|{\bf v}|^2)+({\bf{v}}\cdot\nabla) {\nu},
$$
which can be directly verified by coordinate computations. See \cite{Schutz-1980, Frankel-2012} for details.

Since the pressure $p$ does not appear in any other equation, 
the Euler equation on the form (\ref{euler}) only expresses, in analogy with (\ref{euler1}), that the left member is an exact $1$-form:
\begin{equation}\label{eulerinvariant}
(\frac{\partial }{\partial t}+\mathcal{L}_{\bf v})({\bf \nu})={\rm exact}=d\phi.
\end{equation}
From the scalar $\phi$ the pressure $p$ then can be recovered via
\begin{equation}\label{phi}
\phi=\frac{1}{2}|{\bf v}|^2-{p}+{\rm constant}.
\end{equation}

On acting by $d$ on (\ref{eulerinvariant}),
the local form of Helmholtz-Kirchhoff-Kelvin law of conservation of vorticity follows:
\begin{equation}\label{helmholtz}
(\frac{\partial}{\partial t}+\mathcal{L}_{\bf{v}})(\omega)=0.
\end{equation}
Since the right member of (\ref{eulerinvariant}) is not only a closed differential form, but even an exact form,
a stronger, global, version of the Helmholtz-Kirchhoff-Kelvin law actually follows. This can be expressed by saying that
\begin{equation*}\label{ddtgamma}
\frac{d}{dt} \oint_{\gamma(t)} \nu =0 
\end{equation*}
for any closed curve $\gamma(t)$ which moves with the fluid.


\section{Green functions and harmonic forms}\label{sec:Hodge}

\subsection{The one point Green function}\label{sec:one point green}

In the sequel $M$ will be a closed (compact) Riemann surface provided with a Riemannian  metric on the form (\ref{metric}).

Given a $2$-form $\omega$ on $M$ one can immediately obtain a corresponding  
Green potential $G^\omega$ by Hodge decomposition, i.e. by orthogonal decomposition in the Hilbert space of square integrable $2$-forms.
The inner product for such forms is
\begin{equation}\label{energy2}
(\omega_1,\omega_2)_2=\int_M \omega_1 \wedge *\omega_2.
\end{equation}
The given $2$-form $\omega$ then decomposes as the orthogonal decomposition of an exact form and a harmonic form:
$$
\omega =d (\rm something)+ {\rm harmonic}.
$$
This Hodge decomposition can more precisely be written as
\begin{equation} 
\omega =-d*dG^\omega + {\rm constant} \cdot {\rm vol},
\label{hodge}
\end{equation}
where $G^\omega$ is normalized to be orthogonal to all harmonic $2$-forms, namely satisfying
\begin{equation}
\int_M G^\omega \,{\rm vol}=0.
\label{tvol}
\end{equation}
The constant in (\ref{hodge}) necessarily equals the mean value of $\omega$,
\begin{equation}\label{constant}
{\rm constant} = \frac{1}{V}\int_M\omega, 
\end{equation}
where $V$ denotes the total volume (=area) of $M$:
$$
V=\int_M {\rm vol}.
$$

When $\omega$ is exact, as in (\ref{omegadnu}), the second term in  the right member of (\ref{hodge}) disappears, hence
\begin{equation}\label{ddGomega}
-d*dG^\omega =\omega
\end{equation}
in this case. In the other extreme, when $\omega$ is harmonic, i.e. is a constant multiple of ${\rm vol}$,
the first term disappears. Indeed, the whole Green function disappears:
\begin{equation}\label{Gvol}
G^{{\rm vol}}=0.
\end{equation}

For $1$-forms the inner product has the same expression as for $2$-forms:
\begin{equation}\label{energy1}
(\nu_1,\nu_2)_1=\int_M \nu_1 \wedge *\nu_2.
\end{equation}
If $\nu$ represents a fluid velocity, then $(\nu, \nu)_1$ has the interpretation of being (twice) the {kinetic energy}
of the flow. For a function (potential) $u$ we consider the Dirichlet integral $(du,du)_1$ to be its energy.
Thus constant functions have no energy. 

The {energy}  $\mathcal{E}(\omega,\omega)$ of any $2$-form $\omega$ is 
defined to be the energy  of its Green potential $G^\omega$.
Thus, for the {mutual energy}, 
\begin{equation}\label{Eomega}
\mathcal{E}(\omega_1, \omega_2)=(dG^{\omega_1}, dG^{\omega_2})_1
=\int_M G^{\omega_1} \wedge \omega_2.
\end{equation}
It follows, from (\ref{tvol}) for example, that the volume form has no energy:
$$
\mathcal{E}({\rm vol}, {\rm vol})=0.
$$

It was tacitly assumed above that the forms under discussion belong to the $L^2$-space defined by the inner product. 
However, the mutual energy sometimes extends to circumstances in which source distributions 
have infinite energy. This applies in particular to the Dirac current $\delta_a$, which we consider as a $2$-form with 
distributional coefficient, namely defined by the property that
$$
\int_M \delta_a \wedge \varphi =\varphi(a)
$$ 
for every smooth function $\varphi$.
Certainly $\delta_a$ has infinite energy, but if $a\ne b$, then $\mathcal{E}(\delta_a, \delta_b)$ is still finite and has a natural interpretation:
it is the {\it (one-point) Green function}, or ``monopole'' Green function: 
\begin{equation}\label{Gab}
G(a,b)=G^{\delta_a}(b)= \mathcal{E}(\delta_a, \delta_b).
\end{equation}
Here the first equality can be taken as a definition, and then the second equality
follows on using (\ref{hodge}), (\ref{tvol}):
\begin{align*}
  \mathcal{E}(\delta_a, \delta_b) &= \int_M dG^{\delta_a}\wedge *dG^{\delta_b}= -\int_M G^{\delta_a}\wedge d*dG^{\delta_b} \\
                                  &= \int_M G^{\delta_a}\wedge \left( \delta_b-\frac{1}{V}\,{\rm vol} \right) = G^{\delta_a}(b) = G(a,b).
\end{align*}
This shows in addition that $G(a,b)$ is symmetric. 

Changing now notations from $a,b$ to $z,w$, where later $w$ will have the role of being the location of a point vortex, the Green function $G(z,w)$, as a function
of $z$, has just one pole (at $z=w$). Its Laplacian, as a $2$-form, is then
\begin{equation}\label{ddGdeltavol}
-d* dG(\cdot,w)=\delta_w- \frac{1}{V}{\rm {vol}}.
\end{equation}
It is interesting to notice that among the two terms in the right member of (\ref{ddGdeltavol}), one has infinite energy 
and one has zero energy ($\mathcal{E}(\delta_w, \delta_w)=+\infty$, $\mathcal{E}({\rm vol},{\rm vol})=0$). 

\begin{remark}
It is more common to treat the Dirac delta and the Laplacian (denoted $\Delta$) as ``densities'' with respect to the volume form. 
However, we find our usage convenient.  In any case, the relationships are
$$
\delta_a = ({\rm delta\,\, ''function''}) \,{\rm vol}, \quad d*d \varphi =(\Delta \varphi) \, {\rm vol}.
$$ 
\end{remark}

\begin{remark}[Notational remark]\label{rem:notation} 
Letters like $z$, $w$ will have the double roles of being complex-valued local coordinates
on parts of the Riemann surface and of denoting the points on the surface for which the coordinates have the values in question.
A more formal treatment could use, for example, $P\in M$ for a point and $z(P)\in \C$ for the corresponding coordinate value. 
\end{remark}

\begin{remark}[Real versus complex notation]\label{rem:real-complex}
For future needs we wish to clarify the relationship between real and complex coordinates in the context of tangent and cotangent vectors.

Let $z=x+\I y$ be a complex coordinate on $M$ and consider a curve $t\mapsto z(t)$ in $M$ (for example the trajectory of a
vortex). Set $\dot{z}={dz}/{dt}$, and similarly for $\dot{x}$ and $\dot{y}$, so that $\dot{z}=\dot{x}+\I\dot{y}$. The velocity of this
moving point is primarily to be considered as a vector in the (real) tangent space of $M$ (at the point under consideration). This gives, 
in the picture of viewing tangent vectors as derivations, the velocity vector
\begin{equation}\label{vectorV}
{\bf V}=\dot{x}\frac{\partial}{\partial x}+\dot{y}\frac{\partial}{\partial y}=\dot{z}\frac{\partial}{\partial z}+\dot{\bar{z}}\frac{\partial}{\partial \bar{z}},
\end{equation}
where it is understood that $\dot{\bar{z}}=\overline{\dot{z}}$.

The real tangent space used above can in a next step be complexified, which means that one breaks the connection between 
$\dot{z}$ and $\dot{\bar{z}}$ and consider them as independent complex variables. Equivalently, one allows $\dot{x}$ and $\dot{y}$
to be complex-valued. The so obtained complex tangent space
can be decomposed as a direct sum of its holomorphic and anti-holomorphic subspaces, and then
$$
{\rm proj}:\quad
{\bf V}=\dot{z}\frac{\partial}{\partial z}+\dot{\bar{z}}\frac{\partial}{\partial \bar{z}}\mapsto \dot{z}\frac{\partial}{\partial z}
$$
is a natural and useful identification of the real tangent space with the holomorphic part of the complex tangent space. See \cite{Griffiths-Harris-1978},
in particular Section~2 of Chapter~0, for further discussions.  With this identification $\dot{z}$ 
represents the velocity of $z(t)$. Still one need to keep in mind that $\dot{z}$ is just a complex number and that it is rather
the preimage under ${\rm proj}$ above that is the true velocity, as a real tangent vector. 

The vector ${\bf V}$ above
correspond, via the metric, to the covector
\begin{equation}\label{covectorV}
{\lambda^2}(\dot{x}dx+\dot{y}dy)=\frac{\lambda^2}{2}(\dot{\bar{z}}dz+\dot{{z}}d\bar{z}),
\end{equation}
where $\lambda=\lambda(x,y)=\lambda(z)$ (depending on context). Note that
$$
\frac{1}{2}\dot{\bar{z}}=\frac{1}{2}(\dot{x}-\I \dot{y}), \quad\frac{1}{2}\dot{{z}}=\frac{1}{2}(\dot{x}+\I \dot{y}),
$$
as coefficients of $dz$ and $d\bar{z}$ (respectively)
have similar roles (and signs) as the Wirtinger derivatives $\partial/\partial z$, $\partial/\partial \bar{z}$.
For a covector $\nu$ in general we therefore define
$$
\nu_z=\frac{1}{2}(\nu_x-\I \nu_y), \quad \nu_{\bar{z}}=\frac{1}{2}(\nu_x+\I\nu_y),
$$
so that
$$
\nu=\nu_x dx+\nu_y dy =\nu_zdz+\nu_{\bar{z}}d\bar{z}.
$$
The corresponding (contravariant) vector is then, as in our fluid dynamical contexts,
$$
{\bf v}=\frac{1}{\lambda^2}\big(\nu_x\frac{\partial}{\partial x}+\nu_y\frac{\partial}{\partial y}\big)
=\frac{1}{\lambda^2}\big(\nu_z\frac{\partial}{\partial z}+\nu_{\bar{z}}\frac{\partial}{\partial \bar{z}}\big).
$$
\end{remark}


\subsection{Harmonic one-forms and period relations}

For later use we record the formulas (differentiation with respect to $z$)
$$
dG(z,w)=\frac{\partial G(z,w)}{\partial z}dz+\frac{\partial G(z,w)}{\partial \bar{z}}d\bar{z}=2\re \big(\frac{\partial G(z,w)}{\partial z}dz\big),\qquad 
$$
\begin{equation}\label{stardG0}
*dG(z,w)=-\I\frac{\partial G(z,w)}{\partial z}dz+\I\frac{\partial G(z,w)}{\partial \bar{z}}d\bar{z}=2\im \big(\frac{\partial G(z,w)}{\partial z}dz\big).
\end{equation}
If $\gamma$ is any closed oriented curve in $M$ then clearly $\oint_\gamma dG(\cdot,w)=0$, while the conjugate period defines 
a function 
\begin{equation}\label{Ugamma}
U_\gamma(w)=\oint_\gamma *dG(\cdot,w)=\oint_\gamma \big(-\I\frac{\partial G(z,w)}{\partial z}dz+\I\frac{\partial G(z,w)}{\partial \bar{z}}d\bar{z}\big)
\end{equation}
\begin{equation}\label{dGdGbar}
=-2\I\oint_\gamma \frac{\partial G(z,w)}{\partial z}dz=2\I\oint_\gamma\frac{\partial G(z,w)}{\partial \bar{z}}d\bar{z},
\end{equation}
which, away from $\gamma$, is harmonic in $w$ and makes a unit additive jump as $w$ crosses $\gamma$ from the left to the right.
The harmonicity of $U_\gamma(w$) is perhaps not obvious from outset since $G(\cdot,w)$ is not itself harmonic, but the deviation from harmonicity, namely the extra term in (\ref{ddGdeltavol}), is independent of $w$ and therefore disappears under differentiation. 

Differentiating (\ref{Ugamma}) with respect to $w$ gives 
$$
dU_\gamma(w)=-2\I\oint_\gamma\frac{\partial^2 G(z,w)}{\partial {z}\partial w}d{z}\otimes dw
-2\I\oint_\gamma\frac{\partial^2 G(z,w)}{\partial {z}\partial \bar{w}}d{z}\otimes d\bar{w},
$$
$$
*dU_\gamma(w)=-2\oint_\gamma\frac{\partial^2 G(z,w)}{\partial {z}\partial w}d{z}\otimes dw
+2\oint_\gamma\frac{\partial^2 G(z,w)}{\partial {z}\partial \bar{w}}d{z}\otimes d\bar{w}  \quad
$$
(integration with respect ot $z$, Hodge star and $d$ with respect to $w$).
We have written out a tensor product sign to emphasize that the product between the $dz$ and $dw$ is not a wedge product. 
Adding and subtracting  we obtain the analytic (respectively anti-analytic) differentials
\begin{equation}\label{dUstardU}
dU_\gamma(w) +\I*dU_\gamma(w)=2\frac{\partial U_\gamma(w)}{\partial w}dw=-4\I\oint_\gamma\frac{\partial^2 G(z,w)}{\partial {z}\partial w}d{z}\otimes dw,
\end{equation}
\begin{equation}\label{dUstardUminus} 
dU_\gamma (w)-\I*dU_\gamma(w)=2\frac{\partial U_\gamma(w)}{\partial \bar{w}}d\bar{w}=-4\I\oint_\gamma\frac{\partial^2 G(z,w)}{\partial {z}\partial \bar{w}}d{z}\otimes d\bar{w}.
\end{equation}
 
Let now $\{\alpha_1, \dots, \alpha_\texttt{g}, \beta_1,\dots, \beta_\texttt{g}\}$ be representing cycles for a canonical homology basis for $M$ 
such that each $\beta_j$ intersects $\alpha_j$ once from the right to the left and no other intersections occur (see \cite{Farkas-Kra-1992} for details).
Then there are corresponding harmonic differentials $dU_{\alpha_j}$, $dU_{\beta_j}$ obtained on choosing $\gamma=\alpha_j,\beta_j$ 
in the above construction, and these constitute a basis of harmonic differentials associated with the chosen homology basis.
Precisely we have, for $k,j=1,\dots,\texttt{g}$,
\begin{equation}\label{taualpha}
\oint_{\alpha_k} (-dU_{\beta_j})=\delta_{kj}, \quad \oint_{\beta_k} (-dU_{\beta_j})=0,
\end{equation}
\begin{equation}\label{taubeta}
\oint_{\alpha_k} dU_{\alpha_j}=0, \qquad \oint_{\beta_k} dU_{\alpha_j}=\delta_{kj}. 
\end{equation}
The Kronecker deltas, $\delta_{kj}$, come from the discontinuity (jump) properties of $U_\gamma$ mentioned after (\ref{Ugamma}). 

The integrals of the basic harmonic differentials along non-closed curves become periods of two point Green functions:
\begin{equation}\label{ointdVa}
\int_b^a dU_{\alpha_j}=\oint_{\alpha_j} *dG^{\delta_a-\delta_b},
\end{equation}
\begin{equation}\label{ointdVb} 
\int_b^a dU_{\beta_j}=\oint_{\beta_j} *dG^{\delta_a-\delta_b}.
\end{equation}
Here the integration in the left member is to be performed along a path that does not intersect the curve in the right member.
These formulas are immediate from the definition (\ref{Ugamma}) of $U_\gamma$. 


\section{Energy renormalization and the Hamiltonian}\label{sec:hamiltonian}

\subsection{The renormalized kinetic energy}

When $\nu$ is the flow $1$-form of an incompressible fluid, the equation of continuity says that $d*\nu=0$
(see (\ref{dstarnu})). The  vorticity $2$-form is  $\omega=d\nu$, and (\ref{ddGomega}) holds.
Thus $d(\nu+*dG^\omega)=0$, and setting
\begin{equation}\label{nudG}
\eta=\nu +*dG^\omega
\end{equation}
it follows that $\eta$ is harmonic: $d\eta=0=d*\eta$. Locally we can write $*\eta=d\psi_0$ for some 
harmonic function $\psi_0$. Then
\begin{equation}\label{Ghydro} 
\psi=G^\omega+\psi_0
\end{equation}
becomes a locally defined stream function for the flow, so that 
$$
*\nu=d\psi =dG^\omega +d\psi_0.
$$ 
Different local choices of $\psi$ may differ by additive time dependent constants.
The roles of the additional terms $\eta$ and $\psi_0$, complementing the contribution from the Green function, 
will be somewhat clarified in the context of planar vortex motion and the hydrodynamic
Green function in Section~\ref{sec:Schottky double}.

The relation (\ref{nudG}) written on the form  $\nu=\eta-*dG^\omega$ is an orthogonal decomposition 
of the flow $1$-form $\nu$ with respect to the inner product (\ref{energy1}). Indeed,
$$
(\eta, -*dG^\omega)_1 =\int_M \eta \wedge dG^\omega = -\int_M d\eta\wedge G^\omega=0,
$$ 
since $\eta$ is harmonic. It follows that (twice) the {\it total (kinetic) energy} $(\nu,\nu)_1$ of the flow is given by
$$
2(\nu,\nu)_1=\int_M \nu\wedge *\nu=\frac{\I}{2}\int_M (\nu+\I*\nu)\wedge (\nu-\I*\nu)=
$$
$$
=\int_M dG^\omega \wedge *dG^\omega + \int_M \eta\wedge*\eta
=\mathcal{E}(\omega,\omega)+ \int_M \eta\wedge*\eta=
$$
$$
=\int_M G^\omega\wedge \omega +\sum_{j=1}^{\texttt g}(\oint_{\alpha_j}\eta\oint_{\beta_j}*\eta-\oint_{\alpha_j}*\eta\oint_{\beta_j}\eta).
$$

We shall now specialize on the point vortex case, with vortices of strengths $\Gamma_k$ located at points
$w_k\in M$ ($k=1,\dots, n$). However, we shall not assume that these strengths add up to zero. Hence the sum
\begin{equation}\label{totalGamma}
\Gamma=\sum_{k=1}^n \Gamma_k
\end{equation}
may be non-zero and there will then be a compensating uniform counter vorticity. 
The total vorticity $\omega=d\nu$, which satisfies $\int_M \omega=0$, appears as the right member in
\begin{equation}\label{ddGGamma}
-d*dG^\omega =\omega=\sum_{k=1}^n \Gamma_k \delta_{w_k}- \frac{\Gamma}{V} {\rm vol}.
\end{equation} 
Explicitly we have, on recalling (\ref{Gvol}),
\begin{equation}\label{GGammaG}
G^{\omega} (z) =\sum_{k=1}^n \Gamma_k { G(z,w_k)}.
\end{equation}

On using (\ref{stardG0}) the conjugate $\alpha$-periods of $G^\omega$ can be expressed as
\begin{equation}\label{intalphajdG}
\oint_{\alpha_j}*d{G^\omega}
= -\sum_{k=1}^n2\I\Gamma_k\oint_{\alpha_j}\frac{\partial G(z,w_k)}{\partial z}dz.
\end{equation}
By differentiation and using also (\ref{dGdGbar}), (\ref{dUstardU}) we have 
\begin{equation}\label{ddwkint}
\frac{\partial}{\partial w_k}\oint_{\alpha_j}*d{G^\omega}=\Gamma_k\frac{\partial U_{\alpha_j}(w_k)}{\partial w_k}
=\I\Gamma_k\frac{\partial U^*_{\alpha_j}(w_k)}{\partial w_k},
\end{equation}
\begin{equation}\label{ddwkintalphaj}
\frac{\partial}{\partial w_k}\big(\oint_{\alpha_j}*d{G^\omega}\big)dw_k
=\frac{\Gamma_k}{2}(dU_{\alpha_j}(w_k)+\I*dU_{\alpha_j}(w_k)).
\end{equation}
Here $U^*_{\alpha_j}$ denotes a harmonic conjugate of $U_{\alpha_j}$, whereby $*dU_{\alpha_j}=d(U^*_{\alpha_j})$.
Similar relations hold for periods around $\beta_j$, and for $\bar{w}_k$ derivatives. 

The expression for the kinetic energy can in the point vortex case be written, at least formally, 
$$
2(\nu,\nu)_1=\sum_{k,j=1}^n \Gamma_k \Gamma_jG(w_k,w_j) 
+\sum_{j=1}^{\texttt g}\big(\oint_{\alpha_j}\eta\oint_{\beta_j}*\eta-\oint_{\alpha_j}*\eta\oint_{\beta_j}\eta\big).
$$
However, the presence of the terms $\Gamma_k^2G(w_k,w_k)$ makes the first term become infinite.
In order to renormalize this singular behavior we isolate the logarithmic pole in the Green function by writing
\begin{equation}\label{GlogH}
G(z, w)=\frac{1}{2\pi} (-\log |z-w| +H(z,w)),
\end{equation}
and expand, for a fixed $w$, the regular part in a power series in $z$ as
\begin{equation}\label{greentaylor}
H(z, w)= h_0 (w)+\frac{1}{2}\left(h_1(w)(z-w)+ \overline{h_1 (w)} (\bar{z}-\bar{w})\right)+
\end{equation}
$$
+\frac{1}{2}\left(h_{2}(w)(z-w)^2+\overline{h_{2}(w)}(\bar{z}-\bar{w})^2\right)+h_{11}(w)(z-w)(\bar{z}-\bar{w})+ \mathcal{O}(|z-w|^3).
$$

In Appendix~2, Section~\ref{sec:singular parts}, it is discussed how the coefficients in the expansion (\ref{greentaylor}) behave under
conformal mapping. For example, the coefficient $h_{11}(w)$ transforms as the density of a metric, and it is 
indeed proportional to the given metric:
\begin{equation}\label{h11lambda}
h_{11}(w)=\frac{\pi }{2V}\lambda(w)^2.
\end{equation}
See (\ref{Hlambda}). 
The coefficients $h_0(w)$, $h_1(w)$ and $h_2(w)$ transform, in certain combinations, as ``connections'' under conformal mappings.
For $h_0(w)$ this amounts to saying that it too defines  a metric, namely the {\it Robin metric}, via
\begin{equation}\label{robinmetric}
ds=e^{-h_0(w)}|dw|.
\end{equation}
Metrics of this kind are implicit in the theory of capacity functions, as exposed in \cite{Sario-Oikawa-1969}.
It should be pointed out that our version of the Robin metric depends on the given metric, since the Green
function itself depends on it. The Robin metric can also be adapted to various given circulations, and then it
becomes more intrinsically hydrodynamic in nature.
The coefficient $h_0(w)$ is one example of a {\it (coordinate) Robin function}.

Now, letting $w$ have the role of being a vortex point
indicates that one could renormalize the kinetic energy by simply discarding the singular term $\log|z-w|$, as this seems at first sight 
to produce just a circular symmetric flow, not affecting the speed of the vortex.  However, this is not fully correct in the case of curved surfaces. 
The term  $\log|z-w|$ cannot be just removed, it need be replaced by a term which
counteracts the inhomogeneous transformation law of $h_0(w)$ (see (\ref{h0})).  Such a term comes naturally from the given metric 
$ds=\lambda(w)|dw|=e^{\log\lambda(w)}|dw|$. We see that minus $\log \lambda (w)$ has the right properties, 
and it  combines with $h_0(w)$ into
\begin{equation}\label{Rrobin}
R_{\rm robin}(w)=\frac{1}{2\pi}(h_0(w)+\log\lambda(w)).
\end{equation}
This is indeed a function, the {\it Robin function}, and it appears naturally when writing the singularity of the Green function in terms of
the distance with respect to the given Riemannian metric:
$$
G(z, w)=-\frac{1}{2\pi}\log {\rm dist}(z, w) +R_{\rm robin}(w)+\mathcal{O}({\rm dist}(z, w)).
$$

As for the infinite kinetic energy, we conclude that it should be renormalized by replacing the diagonal
terms $G(w_k, w_k)$ by $R_{\rm robin}(w_k)$. This gives the same equations of motion as more direct approaches, 
or those available in the literature, like \cite{Hally-1980, Boatto-Koiller-2013, Dritschel-Boatto-2015}.
Thus (twice) the {\it renormalized energy} is
$$ 
2(\nu,\nu)_{1,{\rm renorm}}=\sum_{k=1}^n \Gamma_k^2R_{\rm robin}(w_k)+\sum_{k\ne j} \Gamma_k \Gamma_jG(w_k, w_j) +
$$
\begin{equation}\label{nunurenorm}
+\sum_{j=1}^{\texttt g}\big(\oint_{\alpha_j}\eta\oint_{\beta_j}*\eta-\oint_{\alpha_j}*\eta\oint_{\beta_j}\eta\big).
\end{equation}
This depends on the locations $w_1, \dots, w_n$ of the vortices (even the last term depends on these, although somewhat more indirectly). 

Many authors start out directly with the Robin function (\ref{Rrobin}), but there are some advantages with exposing the two terms in it
as individual quantities. One is that it clarifies the structure of the vortex motion equations by separating harmonic contributions, such as 
$h_0(w_k)$ and $G(w_k, w_j)$, from differential geometric contributions, like $\log \lambda (w_k)$. The first category of terms can be classified
as nonlocal, coming from solutions of elliptic partial differential equations on the entire surface, whereas the second category are purely local in nature. 
There is also the contribution from global circulations, represented by the final term in (\ref{nunurenorm}),
and this may be considered to be truly global. 
So the vortex motion is in principle governed by a balance between different categories of contributions, local, nonlocal,
and global in nature. As we shall see
later this balance is drastically changed in the more singular case of dipole motion (Sections~\ref{sec:vortex pairs} and \ref{sec:dipoles}). 
Then only the local terms survive.

We need to elaborate further the expression (\ref{nudG}) and relate it to the global circulations. 
The $1$-form $\eta$ is harmonic and hence can be expanded as
\begin{equation}\label{etaAB}
\eta=-\sum_{j=1}^\texttt{g} A_j dU_{\beta_j}+\sum_{j=1}^\texttt{g} B_j dU_{\alpha_j}.
\end{equation}
The coefficients $A_k$, $B_k$ are in view of  (\ref{taualpha}), (\ref{taubeta}) given by
$$
A_k=\oint_{\alpha_k}\eta, \quad
B_k=\oint_{\beta_k}\eta.
$$
The circulations of the total flow $\nu=\eta-*dG^\omega$  then become
$$
a_k=\oint_{\alpha_k} \nu =A_k-\oint_{\alpha_k}*dG^\omega,
$$
$$
b_k=\oint_{\beta_k}\nu=B_k-\oint_{\beta_k}*dG^\omega.
$$

We shall treat the circulations $a_1,\dots,a_{\texttt{g}}$, $b_1,\dots,b_{\texttt{g}}$ as free (independent)
variables, along with the locations $w_1,\dots, w_n$ of the vortices. These variables will be the coordinates of the phase space, 
and together they determine $\eta$ and $\nu$.
The locations of the vortices are complex variables, in contrast to the circulations which are real.

If $\gamma$ is a closed oriented curve in $M$, fixed in time and avoiding the point vortices, then 
according to the Euler equation (\ref{eulerinvariant}), and in the notation used there, 
the circulation of $\nu$ around $\gamma$ changes with speed
$$
\frac{d}{dt}\oint_\gamma \nu=\oint_\gamma \frac{\partial \nu}{\partial t}=\oint_\gamma (d\phi-\mathcal{L}_{\bf v}\nu)
=-\oint_\gamma i({\bf v}) d\nu=
$$
$$
=-\oint_\gamma i({\bf v})\omega = \oint_\gamma  i({\bf v})\big(\frac{\Gamma}{V}{\rm vol}\big)
=\frac{\Gamma}{V} \oint_\gamma *\nu.
$$ 
Here we used also (\ref{cartan}), (\ref{Hodgei}), (\ref{ddGGamma}).
It follows in particular, on choosing  $\gamma=\alpha_k, \beta_k$, that
\begin{equation}\label{dadtGamma} 
\frac{da_k}{dt}=\frac{\Gamma}{V} \oint_{\alpha_k} *\nu=\frac{\Gamma}{V} \oint_{\alpha_k} *\eta,
\end{equation}
\begin{equation}\label{dbdtGamma}
\frac{db_k}{dt}=\frac{\Gamma}{V} \oint_{\beta_k} *\nu=\frac{\Gamma}{V} \oint_{\beta_k} *\eta.
\end{equation}


\subsection{Matrix formalism and the Hamiltonian}\label{sec:matrix formalism} 

The period matrix (written here in block form)
$$
\left( \begin{array}{cc}
                                            P & R  \\
                                              R^T & Q  \\
\end{array} \right)=
\left( \begin{array}{cc}
                                             (- \oint_{\beta_k}*dU_{\beta_j}) & ( \oint_{\beta_k}*dU_{\alpha_j})  \\
                                              ( \oint_{\alpha_k}*dU_{\beta_j}) & ( -\oint_{\alpha_k}*dU_{\alpha_j})  \\
\end{array} \right)=
$$
\begin{equation}\label{ointdU}
=
\left( \begin{array}{cc}
                                             (\int_M dU_{\beta_k}\wedge*dU_{\beta_j}) & (- \int_M dU_{\beta_k}\wedge*dU_{\alpha_j})  \\
                                              ( -\int_M dU_{\alpha_k}\wedge*dU_{\beta_j}) & (  \int_M dU_{\alpha_k}\wedge*dU_{\alpha_j})  \\
\end{array} \right)
\end{equation}
is symmetric and positive definite (see \cite{Farkas-Kra-1992} in general). In particular, $P$ and $Q$ are themselves symmetric and positive definite.
As for $R$ and $R^T$ we need to be explicit with what are row and column indices above: in all entries above, 
$k$ is the row index, $j$ the column index. Thus, for example, $R_{kj}=\oint_{\beta_k}*dU_{\alpha_j}$. 

Next we write
$$
dU_\alpha=
\left(\begin{array}{c}
dU_{\alpha_1}\\
\vdots\\
dU_{\alpha_\texttt{g}}
\end{array}\right),
\quad
dU_\beta=
\left(\begin{array}{c}
dU_{\beta_1}\\
\vdots\\
dU_{\beta_\texttt{g}}
\end{array}\right).
$$
As mentioned, these two column matrices together define a basis of the harmonic forms.
Another basis is provided by the corresponding Hodge starred column vectors $*dU_\alpha$, $*dU_\beta$, 
in similar matrix notation. The relation between the bases is 
\begin{lemma}\label{lem:PQR}
The two bases $\{  dU_{\alpha},  dU_{\beta}\}$ and $\{  *dU_{\alpha},   *dU_{\beta}\}$
are related by 
\begin{equation}\label{RQPR} 
\left( \begin{array}{cc}
                                           *dU_{\alpha}  \\
                                            *dU_{\beta} \\
\end{array} \right)=
\left( \begin{array}{cc}
                                            R^T & Q  \\
                                              -P & -R \\
\end{array} \right)
\left( \begin{array}{cc}
                                              dU_{\alpha}  \\
                                              dU_{\beta} \\                                            
\end{array} \right).
\end{equation}
\end{lemma}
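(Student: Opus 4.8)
The plan is to rest the whole argument on two structural facts. First, the Hodge star preserves the $2\texttt{g}$-dimensional space of harmonic $1$-forms: on a surface $**=-1$ on $1$-forms, and if $\nu$ is harmonic (so $d\nu=0$ and $d*\nu=0$) then $d(*\nu)=0$ and $d*(*\nu)=-d\nu=0$, i.e. $*\nu$ is again harmonic. Consequently each $*dU_{\alpha_j}$ and each $*dU_{\beta_j}$ is a linear combination, with real constant coefficients, of the basis $\{dU_\alpha,dU_\beta\}$. Second, a harmonic form is uniquely determined by its $\alpha$- and $\beta$-periods. Hence the lemma reduces to computing periods on both sides and matching coefficients; no analysis beyond the period bilinear relations is needed.

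Concretely, I would write the unknown expansions
\[
*dU_{\alpha_j}=\sum_m X_{jm}\,dU_{\alpha_m}+\sum_m Y_{jm}\,dU_{\beta_m},\qquad
*dU_{\beta_j}=\sum_m Z_{jm}\,dU_{\alpha_m}+\sum_m W_{jm}\,dU_{\beta_m},
\]
with real constant matrices $X,Y,Z,W$ to be pinned down, so that the claimed identity (\ref{RQPR}) is equivalent to $X=R^T$, $Y=Q$, $Z=-P$, $W=-R$. The coefficients are extracted by integrating over the cycles and invoking the normalizations (\ref{taualpha}), (\ref{taubeta}), which give $\oint_{\alpha_k}dU_{\alpha_m}=0$, $\oint_{\beta_k}dU_{\alpha_m}=\delta_{km}$, $\oint_{\alpha_k}dU_{\beta_m}=-\delta_{km}$, $\oint_{\beta_k}dU_{\beta_m}=0$. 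Thus, for example, $\oint_{\beta_k}*dU_{\alpha_j}=X_{jk}$ and $\oint_{\alpha_k}*dU_{\alpha_j}=-Y_{jk}$, with the analogous two relations for the $\beta$-row producing $\oint_{\beta_k}*dU_{\beta_j}=Z_{jk}$ and $\oint_{\alpha_k}*dU_{\beta_j}=-W_{jk}$.

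Next I would identify these conjugate periods with the entries of the period matrix through its definition (\ref{ointdU}), reading off $\oint_{\beta_k}*dU_{\alpha_j}=R_{kj}$, $\oint_{\alpha_k}*dU_{\alpha_j}=-Q_{kj}$, $\oint_{\beta_k}*dU_{\beta_j}=-P_{kj}$, and $\oint_{\alpha_k}*dU_{\beta_j}=(R^T)_{kj}=R_{jk}$. Combining these with the four relations above yields $X_{jk}=R_{kj}$, $Y_{jk}=Q_{kj}$, $Z_{jk}=-P_{kj}$, $W_{jk}=-R_{jk}$; invoking the symmetry of $P$ and $Q$ (noted right after (\ref{ointdU})) converts the first two into $X=R^T$ and $Y=Q$, while the last two give $Z=-P$ and $W=-R$. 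This is exactly the block matrix in (\ref{RQPR}).

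The period integrals themselves are routine; the one genuine pitfall is the bookkeeping of row versus column indices and the resulting transposes. I would apply the paper's own convention stated after (\ref{ointdU}) — that $k$ is always the row and $j$ the column index, so $R_{kj}=\oint_{\beta_k}*dU_{\alpha_j}$ — consistently throughout, since it is precisely the asymmetric appearance of $R$ versus $R^T$ in the off-diagonal blocks that separates the correct statement from a transposed impostor. The second place demanding care is the tally of signs: the minus signs already built into the definitions of $P$, $Q$ in (\ref{ointdU}) and into the normalization (\ref{taualpha}) must be tracked so that the signs of the $-P$ and $-R$ blocks come out correctly.
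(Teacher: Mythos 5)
Your proposal is correct and is essentially the paper's own argument: the paper's proof consists of the single remark that one checks both sides of (\ref{RQPR}) have the same periods over the homology basis, which is exactly the computation you carry out (together with the standard facts, left implicit in the paper, that $*$ preserves harmonic $1$-forms and that a harmonic form is determined by its periods). Your index and sign bookkeeping checks out against (\ref{ointdU}), (\ref{taualpha}), (\ref{taubeta}).
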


\begin{proof}
One simply checks that the two members in (\ref{RQPR}) have the same periods with respect to the homology basis  $\{\alpha_j, \beta_j\}$.
\end{proof}

We arrange also the circulations of the flow $\nu$ into column vectors:
$$
a=
\left(\begin{array}{c}
a_1\\
\vdots\\
a_\texttt{g}
\end{array}\right)=
\left(\begin{array}{c}
\oint_{\alpha_1}\nu\\
\vdots\\
\oint_{\alpha_\texttt{g}} \nu
\end{array}\right),
\quad
b=
\left(\begin{array}{c}
b_1\\
\vdots\\
b_\texttt{g}
\end{array}\right)=
\left(\begin{array}{c}
\oint_{\beta_1}\nu\\
\vdots\\
\oint_{\beta_\texttt{g}} \nu
\end{array}\right),
$$
briefly written as
\begin{equation}\label{anubnu}
a=\oint_\alpha \nu, \quad b=\oint_\beta \nu.
\end{equation}
Similarly for other vectors of circulations, for example
\begin{equation}\label{Aa}
A= \oint_\alpha \eta = \oint_\alpha \nu+ \oint_\alpha *dG^\omega=a + \oint_\alpha *dG^\omega,
\end{equation}
\begin{equation}\label{Bb}
B= \oint_\beta \eta = \oint_\beta \nu+ \oint_\beta *dG^\omega=b + \oint_\beta *dG^\omega.
\end{equation}

The $\alpha$-periods of the conjugated Green function were computed in (\ref{intalphajdG}), and from 
(\ref{ddwkint}), (\ref{ddwkintalphaj}) it follows that
\begin{equation*}\label{dAdw}
\frac{\partial A}{\partial w_k}dw_k+\frac{\partial A}{\partial \bar{w}_k}d\bar{w}_k= {\Gamma_k}\,dU_{\alpha}(w_k).
\end{equation*} 
Taking into account also the dependence of $a_1,\dots,a_\texttt{g}$, and doing the same for $B$ and the $\beta$-periods,
gives the total differentials
\begin{equation}\label{dAda}
dA=da+d\oint_\alpha *dG^\omega=da+\sum_{k=1}^n \Gamma_k dU_\alpha (w_k),
\end{equation}
\begin{equation}\label{dBdb}
dB=db+d\oint_\beta *dG^\omega=db+\sum_{k=1}^n \Gamma_k dU_\beta (w_k).
\end{equation}

In matrix notation (\ref{etaAB}) becomes
\begin{equation}\label{AdUBdU}
\eta=-A^T dU_\beta+ B^T dU_\alpha=      
\left(
 \begin{array}{cc}
B^T,&-A^T
\end{array}\right)
              \left(\begin{array}{c}
dU_\alpha\\
dU_\beta
\end{array}\right),
\end{equation}
and the contribution from $\eta$ to the kinetic energy is the quadratic form
$$
\int_M \eta \wedge *\eta= \sum_{k=1}^{\texttt g}(\oint_{\alpha_k}\eta\oint_{\beta_k}*\eta-\oint_{\beta_k}\eta\oint_{\alpha_k}*\eta)=
$$
$$ 
=
\left(
 \begin{array}{cc}
A^T,&B^T
\end{array}\right)
\left( \begin{array}{cc}
                                            P & R  \\
                                              R^T & Q  \\
\end{array} \right)
\left(\begin{array}{c}
A\\
B
\end{array}\right).
$$
The last equality is based on straight-forward computations.
Note that $P$, $Q$, $R$ are fixed matrices, while $A$ and $B$ depend on $w_1,\dots,w_n$,
$a_1,\dots,a_\texttt{g}$, $b_1,\dots, b_\texttt{g}$ via (\ref{Aa}),  (\ref{Bb}).

We now define the {\it Hamiltonian function}, $\mathcal{H}$, as the renormalized kinetic energy of the flow considered as a
function of the locations of the point vortices and of the circulations:
\begin{equation}\label{H}
2\mathcal{H}(w_1,\dots,w_n;a_1,\dots,a_{\texttt{g}},b_1,\dots,b_{\texttt{g}})=2(\nu,\nu)_{1,{\rm renorm}}=
\end{equation}
$$ 
\left( \begin{array}{cccc}
\Gamma_1,&\Gamma_2& \ldots &\Gamma_n
\end{array}\right)
\left( \begin{array}{cccc}
R_{\rm robin}(w_1)& G(w_1,w_2)&\ldots &G(w_1,w_n)  \\
                                              G(w_2,w_1) & R_{\rm robin}(w_2)&\ldots &G(w_2,w_n) \\
\vdots& \vdots & \ddots &\vdots\\
G(w_n, w_1)& G(w_n, w_2)&\dots &R_{\rm robin}(w_n)
\end{array} \right)
\left(\begin{array}{c}
\Gamma_1\\
\Gamma_2\\
\vdots\\
\Gamma_n
\end{array}\right)+
$$
$$
+\left(
 \begin{array}{cc}
A^T,&B^T
\end{array}\right)
\left( \begin{array}{cc}
                                            P & R  \\
                                              R^T & Q  \\
\end{array} \right)
\left(\begin{array}{c}
A\\
B
\end{array}\right).
$$
More formally one could consider the complex conjugates $\bar{w}_j$ of the vortex positions as independent variables and write
the Hamiltonian as 
$$
\mathcal{H}(w_1, \dots,w_n ,\bar{w}_1, \dots, \bar{w}_n ;a_1,\dots,a_{\texttt{g}},b_1,\dots,b_{\texttt{g}}).
$$ 


\section{Hamilton's equations}\label{sec:hamiltoneq} 

\subsection{Phase space and symplectic form}\label{sec:phase space}

To formulate Hamilton's equation one needs, besides the Hamiltonian function itself, a phase 
space and a symplectic form on it. The phase space will in our case consist of all possible configurations of the vortices,
collisions not allowed, together with all possible circulations of the flow around the curves in the homology basis.
Thus we take it to be
$$
\mathcal{F}=\{(w_1,\dots,w_n;a_1,\dots,a_{\texttt{g}}, b_1,\dots,b_\texttt{g}): w_j\in M, w_k\ne w_j \text{ for } k\ne j\}.
$$
Here the $w_j$ shall be interpreted as points on $M$, but in most equations below they will refer to complex coordinates
for such points. Compare Remark~\ref{rem:notation}.

Assuming that $\Gamma\ne 0$ (recall (\ref{totalGamma})) the {\it symplectic form} on $\mathcal{F}$ is taken to be 
$$
\Omega=\sum_{k=1}^n \Gamma_k{\rm vol}(w_k)-\frac{V}{\Gamma}\sum_{j=1}^\texttt{g}da_j\wedge db_j=
$$
\begin{equation}\label{symplectic form}
=-\frac{1}{2 \I}\sum_{k=1}^n \Gamma_k\lambda(w_k)^2 dw_k\wedge d\bar{w}_k
-\Gamma{V}\sum_{j=1}^\texttt{g}\frac{da_j}{\Gamma}\wedge \frac{db_j}{\Gamma}.
\end{equation}
The last expression makes sense also if $\Gamma=0$, because
by (\ref{dadtGamma}), (\ref{dbdtGamma}) the factors $da_j/\Gamma$ and $db_j/\Gamma$ remain finite as $\Gamma\to 0$.
Thus in both expressions above, the last term shall simply be removed if $\Gamma=0$.

Let  
$$
\xi=\sum_{k=1}^n(\dot{w}_k\frac{\partial}{\partial w_k}+\dot{\bar{w}}_k\frac{\partial}{\partial \bar{w}_k})
+\sum_{j=1}^\texttt{g}(\dot{a}_j\frac{\partial}{\partial a_j}+\dot{b}_j\frac{\partial}{\partial b_j})
$$
denote a generic tangent vector of $\mathcal{F}$ viewed as a derivation. 
As for the first term, recall the conventions in Remark~\ref{rem:real-complex}.
The {\it Hamilton equations} in general say that
\begin{equation}\label{ixiOmegadH}
i(\xi)\Omega=d\mathcal{H}.
\end{equation}
One main issue then is to verify that with our choices of phase space, symplectic form and Hamiltonian function,
the equations (\ref{ixiOmegadH}) really produce the expected vortex dynamics.  This will be 
accomplished in Section~\ref{sec:dynamics}. 

To evaluate (\ref{ixiOmegadH}) we first compute the left member as
\begin{equation}\label{left member}
i(\xi)\Omega= -\frac{1}{2 \I} \sum_{k=1}^n\Gamma_k\lambda(w_k)^2\big(\dot{w}_k d\bar{w}_k- \dot{\bar{w}}_k d{w}_k\big)
-\frac{V}{\Gamma}\sum_{j=1}^\texttt{g}(\dot{a}_jdb_j-\dot{b}_jda_j).
\end{equation}
Explicitly (\ref{ixiOmegadH}) therefore says that
\begin{equation}\label{hamiltonexplicit1}
\Gamma_k\lambda(w_k)^2{\dot{w}_k}=-2\I \frac{\partial\mathcal{H}}{\partial \bar{w}_k},
\end{equation}
\begin{equation}\label{hamiltonexplicit2}
\dot{a}_j=-\frac{\Gamma}{V}\frac{\partial\mathcal{H}}{\partial {b}_j},
\quad
\dot{b}_j=+\frac{\Gamma}{V}\frac{\partial\mathcal{H}}{\partial {a}_j}.
\end{equation}
Here the partial derivatives in the right members are implicit in 
\begin{equation}\label{HHHHH}
d\mathcal{H}=\sum_{k=1}^n\frac{\Gamma_k^2}{2} 
\big(\frac{\partial R_{\rm robin}(w_k)}{\partial w_k}dw_k+\frac{\partial R_{\rm robin}(w_k)}{\partial \bar{w}_k}d\bar{w}_k\big)+
\end{equation}
$$
+ \sum_{k,j=1,k\ne j}^n\frac{\Gamma_k\Gamma_j}{2} \big(\frac{\partial G(w_k,w_j)}{\partial {w}_k}d{w}_k
+\frac{\partial G(w_k,w_j)}{\partial \bar{w}_k}d\bar{w}_k\big)+
$$
$$
+ \sum_{k,j=1,k\ne j}^n\frac{\Gamma_k\Gamma_j}{2} \big(\frac{\partial G(w_k,w_j)}{\partial {w}_j}d{w}_j
+\frac{\partial G(w_k,w_j)}{\partial \bar{w}_j}d\bar{w}_j\big)+
$$
$$
+\left(
 \begin{array}{cc}
A^T,&B^T
\end{array}\right)
\left( \begin{array}{cc}
                                            P & R  \\
                                              R^T & Q  \\
\end{array} \right)
\left(\begin{array}{c}
\sum_{k=1}^n \Gamma_kdU_{\alpha}(w_k)\\ \\
\sum_{k=1}^n \Gamma_kdU_{\beta}(w_k)
\end{array}\right)+
$$ 
$$
+\left(
 \begin{array}{cc}
A^T,&B^T
\end{array}\right)
\left( \begin{array}{cc}
                                            P & R  \\
                                              R^T & Q  \\
\end{array} \right)
\left(\begin{array}{c}
da\\
db
\end{array}\right).
$$
Recall here the expressions (\ref{dAda}), (\ref{dBdb}) for $A$, $B$ in terms of the phase space variables
$a$, $b$ (as column matrices) and $w_1,\dots, w_n$, $\bar{w}_1,\dots, \bar{w}_n$.

In the partial derivatives of $R_{\rm robin}$ we single out the two {\it affine connections}
\begin{equation}\label{rr} 
r_{\rm metric}(w)=2\frac{\partial}{\partial w}\log\lambda(w),\qquad\qquad
\end{equation}
\begin{equation}\label{rrr}
r_{\rm robin}(w)\ =-2h_1(w)=-2\frac{\partial}{\partial w}h_0(w),
\end{equation}
(see (\ref{c1c0}) for the last equality). Recalling (\ref{Rrobin})
we then have the following alternative expressions for (ingredients of) the first term in $d\mathcal{H}$:
\begin{equation}\label{dRdw}
\frac{\partial R_{\rm robin}(w_k)}{\partial w_k}
=\frac{1}{4\pi}\big(r_{\rm metric}(w_k)-r_{\rm robin}(w_k)\big)=
\end{equation}
$$
=\frac{1}{2\pi} \big( {h_1(w_k)}+\frac{\partial}{\partial {w}_k}\log \lambda(w_k)\big).
$$
Similarly for the conjugated quantities.


\subsection{Dynamical equations}\label{sec:dynamics} 

The following theorem now makes the Hamilton equations (\ref{ixiOmegadH}) explicit in the vortex dynamics case.

\begin{theorem}\label{thm:dynamics}
The dynamical equations for the vortices $w_k$ and the circulations $a_j$, $b_j$ are, in matrix notation, 
\begin{align*}
\lambda(w_k)^2{\dot{w}_k}=&\frac{\Gamma_k}{2\pi \I} \big( \overline{h_1(w_k)}+\frac{\partial}{\partial \bar{w}_k}\log \lambda(w_k)\big)
-{2\I}\sum_{j=1,j\ne k}^n \Gamma_j \frac{\partial G(w_k,w_j)}{\partial \bar{w}_k}-\\
\\
&+2\left(
 \begin{array}{cc}
B^T,&-A^T
\end{array}\right)
              \left(\begin{array}{c}
{\partial U_\alpha(w_k)}/{\partial \bar{w}_k}\\
{\partial U_\beta(w_k)}/{\partial \bar{w}_k}
\end{array}\right),\\
\\
\left(\begin{array}{c}
\dot{a}\\
\dot{b}
\end{array}\right)=&
\frac{\Gamma}{V}
\left( \begin{array}{cc}
                                            -R^T & -Q \\
                                              P & R  \\
\end{array} \right)
\left(\begin{array}{c}
A\\
B
\end{array}\right).
\end{align*}
Recall the column vectors (\ref{Aa}), (\ref{Bb}):
\begin{equation*}
A=a + \oint_\alpha *dG^\omega,
\quad
B=b + \oint_\beta *dG^\omega.
\end{equation*}
\end{theorem}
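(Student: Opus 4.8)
The plan is to treat Theorem~\ref{thm:dynamics} as a purely computational unwinding of the abstract Hamilton equation (\ref{ixiOmegadH}), all of whose ingredients are already assembled. The contraction $i(\xi)\Omega$ was evaluated in (\ref{left member}), and matching it against $d\mathcal{H}$ has already been reduced to the scalar relations (\ref{hamiltonexplicit1}) and (\ref{hamiltonexplicit2}), namely $\Gamma_k\lambda(w_k)^2\dot w_k=-2\I\,\partial\mathcal{H}/\partial\bar w_k$ together with $\dot a_j=-(\Gamma/V)\,\partial\mathcal{H}/\partial b_j$ and $\dot b_j=+(\Gamma/V)\,\partial\mathcal{H}/\partial a_j$. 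Thus the whole proof consists in reading the partial derivatives $\partial\mathcal{H}/\partial\bar w_k$, $\partial\mathcal{H}/\partial a_j$, $\partial\mathcal{H}/\partial b_j$ off the expression (\ref{HHHHH}) for $d\mathcal{H}$ and simplifying. First I would isolate the coefficient of $d\bar w_k$ for the vortex equation and the coefficients of $da$, $db$ for the circulation equations.

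For the vortex equation I would collect three groups. The diagonal Robin contribution is $\tfrac{\Gamma_k^2}{2}\,\partial R_{\rm robin}(w_k)/\partial\bar w_k$, which by the conjugate of (\ref{dRdw}) equals $\tfrac{\Gamma_k^2}{4\pi}(\overline{h_1(w_k)}+\partial\log\lambda(w_k)/\partial\bar w_k)$; multiplying by $-2\I/\Gamma_k$ gives the stated term $\tfrac{\Gamma_k}{2\pi\I}(\overline{h_1(w_k)}+\partial\log\lambda(w_k)/\partial\bar w_k)$. Both off-diagonal Green sums in (\ref{HHHHH}) feed $\partial\mathcal{H}/\partial\bar w_k$: the first directly, the second after using symmetry $G(w_j,w_k)=G(w_k,w_j)$ to write $\partial G(w_j,w_k)/\partial\bar w_k=\partial G(w_k,w_j)/\partial\bar w_k$. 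Together they produce $\sum_{j\ne k}\Gamma_k\Gamma_j\,\partial G(w_k,w_j)/\partial\bar w_k$, and the prefactor $-2\I/\Gamma_k$ yields the middle term of the theorem.

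The substantive step is the harmonic ($\eta$) contribution. Writing $\mathcal{P}=\left(\begin{smallmatrix}P&R\\ R^T&Q\end{smallmatrix}\right)$ and recalling that $(A;B)$ depends on $w_k$ only through $\oint_\alpha*dG^\omega,\oint_\beta*dG^\omega$, whose $w$-differentials are given by (\ref{dAda}), (\ref{dBdb}), the $d\bar w_k$-part of this term is $\Gamma_k(A^T,B^T)\mathcal{P}\,c$ with $c=(\partial U_\alpha(w_k)/\partial\bar w_k;\,\partial U_\beta(w_k)/\partial\bar w_k)$. After the prefactor this is $-2\I(A^T,B^T)\mathcal{P}\,c$, and the goal is to show it equals $2(B^T,-A^T)c$. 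Here I would invoke Lemma~\ref{lem:PQR}: taking the $d\bar w$-coefficient of both sides of (\ref{RQPR}), and using that by the pattern of (\ref{stardG0}) the $d\bar w$-coefficient of $*dU_\gamma$ is $\I$ times that of $dU_\gamma$, one obtains $\I c=\mathcal{S}\mathcal{P}\,c$ with $\mathcal{S}=\left(\begin{smallmatrix}0&I\\ -I&0\end{smallmatrix}\right)$, since the matrix in (\ref{RQPR}) is precisely $\mathcal{S}\mathcal{P}$. Hence $\mathcal{P}c=-\I\,\mathcal{S}c=(-\I c_\beta;\,\I c_\alpha)$, so $-2\I(A^T,B^T)\mathcal{P}c=-2\I(-\I A^Tc_\beta+\I B^Tc_\alpha)=2(B^Tc_\alpha-A^Tc_\beta)=2(B^T,-A^T)c$, as required.

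Finally, for the circulation equations only the quadratic form $\tfrac12(A^T,B^T)\mathcal{P}(A;B)$ depends on $a,b$, and by (\ref{Aa}), (\ref{Bb}) one has $\partial(A;B)/\partial a=(I;0)$ and $\partial(A;B)/\partial b=(0;I)$. Differentiating the symmetric form gives $\partial\mathcal{H}/\partial a=PA+RB$ and $\partial\mathcal{H}/\partial b=R^TA+QB$; inserting these into (\ref{hamiltonexplicit2}) yields exactly the block form $\tfrac{\Gamma}{V}\left(\begin{smallmatrix}-R^T&-Q\\ P&R\end{smallmatrix}\right)(A;B)$. The main obstacle throughout is bookkeeping of the factors of $\I$ and of the signs coming from the $\partial/\partial\bar w$ versus $\partial/\partial w$ conventions; the one genuinely structural point is the identification of the matrix of Lemma~\ref{lem:PQR} as $\mathcal{S}\mathcal{P}$, which converts the Hodge-star relation into the eigenvalue relation $\mathcal{S}\mathcal{P}c=\I c$ that collapses the $\eta$-term into its final antisymmetric shape.
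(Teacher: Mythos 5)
Your proposal is correct and follows essentially the same route as the paper: unwinding (\ref{ixiOmegadH}) via (\ref{hamiltonexplicit1})--(\ref{hamiltonexplicit2}), reading the coefficients off (\ref{HHHHH}), and invoking Lemma~\ref{lem:PQR} to collapse the $\eta$-term. Your reformulation of (\ref{RQPR}) as the coefficient relation $\mathcal{S}\mathcal{P}c=\I c$ is just a reshuffling of the paper's step of moving the antisymmetric block matrix onto the row vector $(A^T,B^T)$ to produce $(B^T,-A^T)$, so the two arguments are computationally identical.
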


\begin{proof}
The equations are obtained by identifying the expression  (\ref{left member}) for $i(\xi)\Omega$
with the expression (\ref{HHHHH}) for $d\mathcal{H}$.  
Thus, in (\ref{left member}), the $k$:th term in the first sum, 
\begin{equation}\label{k:th term} 
 -\frac{1}{2 \I} \Gamma_k\lambda(w_k)^2\big(\dot{w}_k d\bar{w}_k- \dot{\bar{w}}_k d{w}_k\big),
\end{equation}
is to be identified with the corresponding parts in the right member of (\ref{HHHHH}), namely the first three terms. 
Together with (\ref{dRdw}) this gives immediately the first two terms in the equation for $\dot{w}_k$. 

The third term comes from the term
$$
\left(
 \begin{array}{cc}
A^T,&B^T
\end{array}\right)
\left( \begin{array}{cc}
                                            P & R  \\
                                              R^T & Q  \\
\end{array} \right)
\left(\begin{array}{c}
\sum_{k=1}^n \Gamma_kdU_{\alpha}(w_k)\\ \\
\sum_{k=1}^n \Gamma_kdU_{\beta}(w_k)
\end{array}\right)
$$ 
in equation (\ref{HHHHH}). On using (\ref{RQPR}) this can be rewritten as
$$
\left(
 \begin{array}{cc}
B^T,&-A^T
\end{array}\right)
\left( \begin{array}{cc}
                                            R^T & Q  \\
                                              -P & -R  \\
\end{array} \right)
\left(\begin{array}{c}
\sum_{k=1}^n \Gamma_kdU_{\alpha}(w_k)\\ \\
\sum_{k=1}^n \Gamma_kdU_{\beta}(w_k)
\end{array}\right)=
$$ 
$$
=\left(
 \begin{array}{cc}
B^T,&-A^T
\end{array}\right)
\left(\begin{array}{c}
\sum_{k=1}^n \Gamma_k*dU_{\alpha}(w_k)\\ \\
\sum_{k=1}^n \Gamma_k*dU_{\beta}(w_k)
\end{array}\right).
$$ 
Identifying here the coefficient for $d\bar{w}_k$ with the corresponding coefficient in (\ref{k:th term}) gives 
$$   
\lambda(w_k)^2\dot{w}_k=2\left(
 \begin{array}{cc}
B^T,&-A^T
\end{array}\right)
              \left(\begin{array}{c}
\frac{\partial U_\alpha(w_k)}{\partial \bar{w}_k}\\
\frac{\partial U_\beta(w_k)}{\partial \bar{w}_k}\end{array}\right),
$$
as desired.

Finally, the term
$$
\left(
 \begin{array}{cc}
A^T,&B^T
\end{array}\right)
\left( \begin{array}{cc}
                                            P & R  \\
                                              R^T & Q  \\
\end{array} \right)
\left(\begin{array}{c}
da\\
db
\end{array}\right).
$$
in (\ref{HHHHH}) is 
to be identified with 
$$
-\frac{V}{\Gamma}
\left( \begin{array}{cc}
                                            -\dot{b}, & \dot{a}  \\
                                              
\end{array} \right)
\left( \begin{array}{cc}
                                            da  \\
                                              db \\
\end{array} \right)
$$
in (\ref{left member}). This gives
$$
\left(\begin{array}{c}
\dot{a}\\
\dot{b}
\end{array}\right)=
\frac{\Gamma}{V}
\left( \begin{array}{cc}
                                            -R^T & -Q \\
                                              P & R  \\
\end{array} \right)
\left(\begin{array}{c}
A\\
B
\end{array}\right),
$$
as desired.
\end{proof}

Besides the formal proof of the theorem it is essential to show that the dynamics given in Theorem~\ref{thm:dynamics}
really is the ``expected dynamics''.  For the vortices this expected dynamics is that each individual vortex, say $w_k$, 
moves with the speed of total flow, namely the $1$-form $\nu=\eta-*dG^\omega$ (see (\ref{nudG})) converted into a vector,
however with the modification that the contribution to $\nu$
from $w_k$ itself shall be regularized according to standard procedures involving the Robin function. 

In the notation of Remark~\ref{rem:real-complex}, the velocity ${\bf V}$ of the $k$:th vortex $w_k$ corresponds,
with $z=w_k$, to
the covector $\frac{\lambda^2}{2}(\dot{\bar{z}}dz+\dot{{z}}d\bar{z})$, see equations (\ref{vectorV}), (\ref{covectorV}).
Thus in the notation of the Theorem~\ref{thm:dynamics} the expected dynamics is expressed by the equation
$$
\eta-\{*dG^\omega\}_{\rm renormalized}=\frac{1}{2}\lambda(w_k)^2\big(\dot{\bar{w}}_kdw_k+{\dot{w}_k}d\bar{w}_k\big),
$$
the left member to be evaluated at $w_k$.
Here the second term in the left member is the well-known \cite{Llewellyn-2011, Boatto-Koiller-2013}
contribution from the Green function, where (referring to (\ref{dRdw}), (\ref{GGammaG}))
$$
\{*dG^\omega\}_{\rm renormalized}(z)=\Gamma_k*dR_{\rm robin}(z)+\sum_{j=1, j\ne k}^n\Gamma_j*dG(z,w_j),
$$
evaluated at $z=w_k$ and the Hodge star taken with respect to $z$. This gives
$$
-\{*dG^\omega\}_{\rm renormalized}(w_k)
=\I\Gamma_k\Big(\frac{\partial R_{\rm robin}(w_k)}{\partial w_k}dw_k-\frac{\partial R_{\rm robin}(w_k)}{\partial \bar{w}_k}d\bar{w}_k\Big)+
$$
$$
+{\I}\sum_{j=1,j\ne k}^n \Gamma_j\Big(\frac{\partial G(w_k,w_j)}{\partial {w}_k}dw_k- \frac{\partial G(w_k,w_j)}{\partial \bar{w}_k}d\bar{w}_k\Big)
$$

The above accounts for the first two terms in the right member of Theorem~\ref{thm:dynamics}. 
The third term is the contribution from $\eta$, and can by (\ref{AdUBdU}) be directly identified as $2\eta_{\bar{z}}$ evaluated at $z=w_k$.

Finally we verify that the periods change according to (\ref{dadtGamma}), (\ref{dbdtGamma}).
By (\ref{AdUBdU}), (\ref{RQPR}) we have 
$$
*\eta=
\left( \begin{array}{cc}
                                            B^T, & -A^T  \\
                                              
\end{array} \right)
\left( \begin{array}{cc}
                                            *dU_{\alpha}  \\
                                              *dU_{\beta} \\
\end{array} \right)
=\left( \begin{array}{cc}
                                            B^T, & -A^T  \\
                                              
\end{array} \right)
\left( \begin{array}{cc}
                                            R^T & Q  \\
                                              -P & -R \\
\end{array} \right)
\left( \begin{array}{cc}
                                              dU_{\alpha}  \\
                                              dU_{\beta} \\                                            
\end{array} \right)=
$$
$$
=
\left( \begin{array}{cc}
                                            (dU_\alpha)^T, & (dU_\beta)^T \\                                          
\end{array} \right)
\left( \begin{array}{cc}
                                            P&R\\
                                            R^T & Q  \\
                                            
\end{array} \right)
\left( \begin{array}{cc}
                                              A \\
                                             B \\                                            
\end{array} \right).
$$
By (\ref{taualpha}), (\ref{taubeta}) integration of this gives
$$
\left( \begin{array}{cc}
                                              \oint_\alpha*\eta \\
                                              \oint_\beta*\eta \\                                            
\end{array} \right)
=
\left( \begin{array}{cc}
                                           
                                           - R^T & -Q  \\
                                            P & R
\end{array} \right)
\left( \begin{array}{cc}
                                              A \\
                                              B \\                                            
\end{array} \right).
$$
From this we see that the last equation in the theorem is in exact agreement with (\ref{dadtGamma}), (\ref{dbdtGamma}),
as claimed.


\section{Motion of a single point vortex}\label{sec:single vortex}

In the case of a single vortex Theorem~\ref{thm:dynamics} simplifies a little. We may then denote the vortex point $w_1$
simply by $w$, and the strength $\Gamma_1$ agrees with the total vorticity $\Gamma$ for the point vortices. 
If in addition $\texttt{g}=0$ then everything simplifies considerable. There is only one free variable, the location $w\in M$
for the vortex, and the dynamical equation for this is
$$
\lambda(w)^2\frac{dw}{dt}=\frac{\Gamma}{2\pi \I} \big( \overline{h_1(w)}+\frac{\partial}{\partial \bar{w}}\log \lambda(w)\big).
$$
The Hamiltonian and the symplectic form are
$$
\mathcal{H}(w)=\Gamma^2 R_{\rm robin}(w)=\frac{\Gamma^2}{2\pi}(h_0(w)+\log \lambda(w)),
$$
$$
\Omega=\Gamma \,{\rm vol}(w)= -\frac{1}{2\I}\Gamma \lambda(w)^2 dw\wedge d\bar{w}.
$$
It follows that if (and only if) the two metrics
\begin{equation}\label{dsmetric}
ds_{\rm metric}^2 =\lambda(w)^2|dw|^2=\frac{2V}{\pi}h_{11}(w)|dw|^2,
\end{equation}
\begin{equation}\label{dsrobin}
ds_{\rm robin}^2\ =e^{-2h_0(w)}|dw|^2 \qquad\qquad\qquad\quad
\end{equation}
are identical, up to a constant factor, then the vortex will never move, whatever its initial position is. 
In \cite{Boatto-Koiller-2013, Grotta-Ragazzo-Barros-Viglioni-2017} this is referred to as 
$ds_{\rm metric}$ being a ``steady vortex metric'', or being ``hydrodynamically neutral''.  

\begin{example}\label{ex:sphere}
An obvious example is a homogenous sphere. Indeed, for a sphere of radius one 
we have, in coordinates obtained by stereographic projection into the complex plane, 
well-known formulas such as
$$
G(z,w)=-\frac{1}{4\pi}\Big(\log \frac{|z-w|^2}{(1+|z|^2)(1+|w|^2)}+1\Big),
$$
$$
h_0(w)=\log (1+|w|^2)-\frac{1}{2},\quad
h_1(w)=\frac{\bar{w}}{1+|w|^2},\quad
h_2(w)=\frac{\bar{w}^2}{2(1+|w|^2)^2},
$$
$$
e^{-2h_0(w)}=\frac{e}{(1+|w|^2)^2},\quad
\lambda(w)^2=8h_{11}(w)= \frac{4}{(1+|w|^2)^2}. 
$$
The Hamiltonian function is constant, 
$$
\mathcal{H}(w)=\frac{\Gamma^2}{2\pi}\big(\log (1+|w|^2)-\frac{1}{2}+\log \frac{2}{1+|w|^2}\big)=\frac{\Gamma^2}{4\pi}(2\log 2-1),
$$
and there is no motion of the vortex.
\end{example} 

When $\texttt{g}>0$ one has to include also the circulating flows in the picture, so that the dynamical equations become, by
Theorem~\ref{thm:dynamics},
\begin{align*}
\lambda(w)^2\dot{w}=&\frac{\Gamma}{2\pi \I} \big( \overline{h_1(w)}+\frac{\partial}{\partial \bar{w}}\log \lambda(w)\big)
+2\left(
 \begin{array}{cc}
B^T,&-A^T
\end{array}\right)
              \left(\begin{array}{c}
{\partial U_\alpha(w)}/{\partial \bar{w}}\\
{\partial U_\beta(w)}/{\partial \bar{w}}
\end{array}\right),\\
\\
\left(\begin{array}{c}
\dot{a}\\
\dot{b}
\end{array}\right)=&
\frac{\Gamma}{V}
\left( \begin{array}{cc}
                                            -R^T & -Q \\
                                              P & R  \\
\end{array} \right)
\left(\begin{array}{c}
A\\
B
\end{array}\right).
\end{align*}


\section{Motion of a vortex pair in the dipole limit}\label{sec:vortex pairs} 

For a vortex pair $\{w_1, w_2\}$ with $\Gamma_1=-\Gamma_2$ we have $\Gamma=\Gamma_1+\Gamma_2=0$, 
hence there is no compensating background vorticity. The circulations $a$ and $b$ will be time independent 
by the last equation in Theorem~\ref{thm:dynamics} and are not needed in phase
space, which then simply becomes
$$
\mathcal{F}=\{(w_1,w_2): w_1,w_2 \in M, w_1\ne w_2\},
$$ 
with symplectic form
\begin{equation}\label{symplectic form1}
\Omega= -\frac{1}{2\I}\Gamma_1\big( \lambda(w_1)^2d w_1\wedge d\bar{w}_1 -\lambda(w_2)^2 dw_2\wedge d\bar{w}_2\big).
\end{equation}
The Hamiltonian is the same quantity as before, see (\ref{H}), but it may now be considered as a function only of $w_1$ and $w_2$. 
The circulations $a$, $b$ are fixed parameters, given in advance.

However the period vectors $A$ and $B$ still depend on time
via $w_1$, $w_2$. Indeed, using (\ref{ointdVa}), (\ref{ointdVb}) we have
$$
A=a+\Gamma_1\oint_{\alpha}\big(*dG(\cdot,w_1)-*dG(\cdot,w_2)\big)=a+\Gamma_1\int_{w_2}^{w_1}dU_{\alpha} ,
$$
$$
B=b+\Gamma_1\oint_{\beta}\big(*dG(\cdot,w_1)-*dG(\cdot,w_2)\big)=b+\Gamma_1\int_{w_2}^{w_1}dU_{\beta_j} . 
$$
The Hamiltonian is 
$$
2\mathcal{H}(w_1,w_2)=
\left(
 \begin{array}{cc}
\Gamma_1,&-\Gamma_1
\end{array}\right)
\left( \begin{array}{cc}
                                            R_{\rm robin}(w_1) & G(w_1,w_2)  \\
                                              G(w_2,w_1) &R_{\rm robin}(w_2)\\
\end{array} \right)
\left(\begin{array}{c}
\Gamma_1\\
-\Gamma_1
\end{array}\right)+
$$
\begin{equation}\label{hamiltonian}
+\left(
 \begin{array}{cc}
A^T,&B^T
\end{array}\right)
\left( \begin{array}{cc}
                                            P & R  \\
                                              R^T & Q  \\
\end{array} \right)
\left(\begin{array}{c}
A\\
B
\end{array}\right),
\end{equation}
and the dynamics of the vortex pair $\{w_1,w_2\}$ becomes, by (\ref{hamiltonexplicit1}), 
$$
\Gamma_1 \lambda(w_1)^2 \,\dot{w}_1= -2\I \,\frac{\partial\mathcal{H}(w_1,w_2)}{\partial\bar{w}_1},
$$
$$
\Gamma_1 \lambda(w_2)^2 \,\dot{w}_2= +2\I \,\frac{\partial\mathcal{H}(w_1,w_2)}{\partial\bar{w}_2}.
$$

In place of $w_1$ and $w_2$ we may turn to $w=\frac{1}{2}(w_1+w_2)$ and $u=\frac{1}{2}(w_1-w_2)$ as coordinates.
These are similar to the ``center-arrow coordinates'' used in \cite{Boatto-Koiller-2013, Llewellyn-Nagem-2013}. Then 
\begin{equation}\label{wpmu}
\begin{cases}
w_1=w+u,\\
w_2=w-u.
\end{cases}
\end{equation}
We are interested in the limit $u\to 0$, and
Taylor expansion of $H(w_1,w_2)=H(w+u, w-u)$ with respect to $u$, $\bar{u}$ gives, using relations in Appendix~2,
$$
H(w+u,w-u)=h_0(w)-\re\big((4h_2(w)-\frac{\partial h_1(w)}{\partial w})u^2\big)+
$$
\begin{equation}\label{expansionH}
+(4h_{11}(w)-\frac{\partial h_1(w)}{\partial\bar{w}})|u|^2 +\mathcal{O}(|u|^3).
\end{equation}
The linear terms vanish because of the symmetry of $H(w_1,w_2)$. 
The second order terms will only be used briefly in Section~\ref{sec:dipoles}, and 
even the constant term $h_0(w)$ will eventually disappear below.
In addition to (\ref{expansionH}) we have the Taylor expansions
$$
h_0(w\pm u)=h_0(w)\pm \big(h_1(w)u+\overline{{h}_1(w)u}\big)+\mathcal{O}(|u|^2),
$$
$$
\log \lambda(w \pm u)=\log\lambda(w) \pm \frac{1}{2}\big(r(w)u+\overline{r(w)u}\big)+\mathcal{O}(|u|^2)
$$
(coupled signs throughout). The latter equation uses the affine connection $r(w)=r_{\rm metric}(w)$, see 
(\ref{rr}) or (\ref{rlog}). For later use we record also the expansion
\begin{equation}\label{expansion lambda}
\lambda(w\pm u)^2=\lambda(w)^2\Big(1\pm \big(r(w)u +\overline{r(w)}\bar{u}\big)+\mathcal{O}(|u|^2)\Big). 
\end{equation}

Using these expansions we obtain, for the first matrix in the Hamiltonian (\ref{hamiltonian}),
\begin{equation}\label{hamiltonianmatrix}
2\pi \left( \begin{array}{cc}
                                            R_{\rm robin}(w_1) & G(w_1,w_2)  \\
                                              G(w_2,w_1) &R_{\rm robin}(w_2)\\
\end{array} \right)=
\end{equation}
$$
=\left( \begin{array}{cc}
                                            h_0(w+u)+\log\lambda(w+u) & -\log |2u| +H(w+u,w-u)\\
                                               -\log | 2u| +H(w+u,w-u) &h_0(w-u)+\log\lambda(w-u)\\
\end{array} \right)=
$$
$$
=\left( \begin{array}{cc}
                                            \log\lambda(w) & -\log |2u|\\
                                               -\log | 2u|  &\log\lambda(w)\\
\end{array} \right)
+\left( \begin{array}{cc}
                                            h_0(w) & h_0(w)\\
                                               h_0(w) &h_0(w)\\
\end{array} \right)+
$$
$$
+\left( \begin{array}{cc}
                                           h_1(w)+\frac{1}{2}r(w) & 0\\
                                               0 &-h_1(w)-\frac{1}{2}r(w)\\
\end{array} \right)u
+\left( \begin{array}{cc}
                                           \overline{{h}_1(w)}+\frac{1}{2}\overline{r(w)} & 0\\
                                               0 &-\overline{{h}_1(w)}-\frac{1}{2}\overline{r(w)}\\
\end{array} \right)\bar{u}+
$$
$$
+\mathcal{O}(|u|^2).
$$

When acting with $(\Gamma_1, -\Gamma_1)$ on both sides of the matrix (\ref{hamiltonianmatrix}), the last three terms 
in the final expression disappear and the result becomes, up to $\mathcal{O}(|u|^2)$,
\begin{equation}\label{hamiltonianmatrix1}
\left( \begin{array}{cc}
                                            \Gamma_1 & -\Gamma_1
                                      
\end{array} \right)
\left( \begin{array}{cc}                                            \log\lambda(w) & -\log |2u| \\
                                               -\log | 2u| &\log\lambda(w)\\
\end{array} \right)
\left( \begin{array}{cc}                                            \Gamma_1 \\
                                               -\Gamma_1 \\
\end{array} \right)=
\end{equation}
$$
=2\Gamma_1^2\big(\log\lambda(w)+\log |2u|\big).
$$

The full Hamiltonian (\ref{hamiltonian}) therefore becomes, up to terms of order $\mathcal{O}(|u|^2)$,
$$
2\mathcal{H}(w+u,w-u)=\frac{\Gamma_1^2}{\pi}\big(\log\lambda(w)+\log |2u|\big)+
$$
$$
+\left(
 \begin{array}{cc}
(a+\Gamma_1 \int_{w-u}^{w+u}dU_\alpha)^T,&(b+\Gamma_1 \int_{w-u}^{w+u}dU_\beta)^T\end{array}\right)
\left( \begin{array}{cc}
                                            P & R  \\
                                              R^T & Q  \\
\end{array} \right)
\left(\begin{array}{c}
a+\Gamma_1 \int_{w-u}^{w+u}dU_\alpha\\
b+\Gamma_1 \int_{w-u}^{w+u}dU_\beta
\end{array}\right).
$$
Here one can see that the final term remains bounded as $u\to 0$, hence is  negligible in this limit
compared to the first term. Therefore the  leading terms  in this limit are given by 
\begin{equation}\label{Hwpmu}
\mathcal{H}(w+u,w-u)=\frac{\Gamma_1^2}{2\pi}\big(\log\lambda(w)+\log |2u|\big)+\mathcal{O}(1) \quad (u\to 0).
\end{equation}
This is essentially a constant factor times log of the distance (in the metric) between $w_1=w+u$ and $w_2=w-u$.
Indeed, we recover the simple and beautiful formula 
$$
\mathcal{H}(w_1,w_2)=\frac{\Gamma_1^2}{2\pi}\log {\rm dist\,}(w_1,w_2)+\mathcal{O}(1) \quad (|w_1-w_2|\to 0)
$$
of Boatto and Koiller. See equations (24), (25) in \cite{Boatto-Koiller-2013}. Compare also \cite{Grotta-Ragazzo-Barros-Viglioni-2017}.
The distance is taken with respect to the Riemannian metric. The error term
$\mathcal{O}(1)$ can be identified with a what is called the ``Batman function'' in \cite{Boatto-Koiller-2013}. 
In our notations the latter is given by (\ref{qpolarized}).

Taking the differential of (\ref{Hwpmu}) gives, on using again
the metric affine connection $r(w)=r_{\rm metric}(w)$ defined by (\ref{rr}),
$$
d\mathcal{H}=\frac{\partial \mathcal{H}}{\partial w}dw+\frac{\partial \mathcal{H}}{\partial \bar{w}}d\bar{w}
+\frac{\partial \mathcal{H}}{\partial {u}}du+\frac{\partial \mathcal{H}}{\partial \bar{u}}d\bar{u}=
$$
\begin{equation}\label{rdwrdw}
=\frac{\Gamma_1^2}{4\pi}\Big( r(w)dw+\overline{r(w)}d\bar{w}
+\frac{du}{ u}+\frac{d\bar{u}}{\bar{u}}\Big). 
\end{equation}
Recalling (\ref{expansion lambda}) we can expand the symplectic $2$-form given by (\ref{symplectic form1}) in terms of $w$ and $u$ as
$$
\Omega=-\frac{\Gamma_1}{2\I}\big(\lambda(w+u)^2d(w+u)\wedge d(\bar{w}+\bar{u})-\lambda(w-u)^2d(w-u)\wedge d(\bar{w}-\bar{u})=
$$
$$
=\I{\Gamma_1}\lambda(w)^2\Big(dw\wedge d\bar{u}- d\bar{w}\wedge du  
+ \big(r(w)u+\overline{r(w)u}\big)
\big(dw\wedge d\bar{w}+du\wedge d\bar{u}\big)\Big)+\mathcal{O}(|u|^2).
$$
With
$$
\xi=\dot{w}\frac{\partial}{\partial w}+\dot{\bar{w}}\frac{\partial}{\partial \bar{w}}
+\dot{u}\frac{\partial}{\partial u}+\dot{\bar{u}}\frac{\partial}{\partial \bar{u}}
$$
this gives, as $u\to 0$,
$$
i(\xi)\Omega=
\I \Gamma_1 \lambda(w)^2 
\Big(
-\big(\dot{\bar{u}} +(r(w)u+\overline{r(w)u})\dot{\bar{w}}\big)dw
+\big( \dot{u}+(r(w)u+\overline{r(w)u})\dot{w}\big)d\bar{w}-
$$
$$
-\big(\dot{\bar{w}}+(r(w)u+\overline{r(w)u})\dot{\bar{u}}\big)du
+\big( \dot{w}+(r(w)u+\overline{r(w)u}\big)\dot{u})d\bar{u}+\mathcal{O}(|u|^2\Big).
$$
Comparing with (\ref{rdwrdw}) we see that the dynamics of the vortex pair is described by the two equations 
\begin{equation}\label{dotu}
\frac{\Gamma_1}{4\pi\I}\overline{r(w)}=\lambda(w)^2\big(\dot{{u}} +(r(w)u+\overline{r(w)u})\dot{{w}}\big)+\mathcal{O}(|u|^2),
\end{equation}
\begin{equation}\label{dotw}
\frac{\Gamma_1}{4\pi\I} \frac{1}{\bar{u}}=\lambda(w)^2\big(\dot{{w}}+(r(w)u+\overline{r(w)u})\dot{{u}}\big)+\mathcal{O}(|u|^2).
\end{equation}

Equation (\ref{dotw}) can used to eliminate $\dot{w}$ in (\ref{dotu}), which then becomes 
$$
\frac{\Gamma_1}{4\pi\I}\overline{r(w)}=\lambda(w)^2\dot{u}+\big(r(w)u+\overline{r(w)u}\big)
\Big(\frac{\Gamma_1}{4\pi\I\bar{u}}-\lambda(w)^2\big(r(w)u+\overline{r(w)u}\big)\dot{u} \Big)
+\mathcal{O}(|u|^2).
$$
Here the left member cancels with one of the terms in the right member, and the rest can be written, after division by $u$, 
$$
0=\lambda(w)^2\cdot\frac{\dot{u}}{u}+r(w)\frac{\Gamma_1}{4\pi\I\bar{u}}-\lambda(w)^2\big(r(w)u+\overline{r(w)u}\big)^2\cdot\frac{\dot{u}}{u} 
+\mathcal{O}(|u|).
$$
In this equation the third term in the right member is of a smaller magnitude than the other two terms
and can be incorporated in the final $\mathcal{O}(|u|)$.
Thus we arrive at 
\begin{equation}\label{dotlogu}
\lambda(w)^2\frac{d}{dt}\log u+r(w)\frac{\Gamma_1}{4\pi\I\bar{u}}=0,
\end{equation}
valid with an error of at most $\mathcal{O}(|u|)$ as $u\to 0$.

The above equation, (\ref{dotlogu}), essentially comes from (\ref{dotu}), and it is to be combined again with
(\ref{dotw}). For the latter it is enough to use the simplified form
\begin{equation}\label{dotu1}
\frac{\Gamma_1}{4\pi\I\bar{u}} =\lambda(w)^2\frac{{dw}}{dt},
\end{equation}
which only uses the leading terms, and for which the error is still at most $\mathcal{O}(|u|)$.
Inserting (\ref{dotu1}) in (\ref{dotlogu}) results in the master equation
\begin{equation}\label{dlogudt}
\frac{d}{dt}\log u+r(w)\frac{dw}{dt}=0.
\end{equation}

One problem with (\ref{dlogudt}) is that the speed $dw/dt$ becomes infinite, along with the first term, in the dipole limit. 
However, this problem only affects the real part of the equation. 
For the imaginary part one can
replace true time $t$ by an arbitrary parameter, which is scaled with $u$ so that $dw/dt$ remains finite
as $|u|\to 0$. Alternatively one may scale $\Gamma_1$ with $u$ so that the left member of (\ref{dotu1})
remains finite. Then one can still think of any new parameter  $t$ as a time variable.
In any case, we take imaginary parts  of (\ref{dlogudt}) and obtain
\begin{equation}\label{dargudt}
\frac{d}{dt}\arg u+\im (r(w)\frac{d{w}}{dt})=0.
\end{equation} 

Equation (\ref{dotu1}) shows that the directions of $u$ and $dw/dt$ are related as
\begin{equation}\label{arguargdwdt}
\arg u= \arg \frac{dw}{dt}\pm \frac{\pi}{2},
\end{equation}
where the plus sign is to be chosen if $\Gamma_1>0$, the minus sign if $\Gamma_1<0$.
Now (\ref{dargudt}) and (\ref{arguargdwdt}) taken together give the final law
for the motion of the center $w$ of the vortex pair in the dipole limit:
\begin{equation}\label{ddtargdwdt0}
\frac{d}{dt}\arg \frac{dw}{dt}+\im (r(w)\frac{d{w}}{dt})=0.
\end{equation} 
As explained in Appendix~1, Section~\ref{sec:connections} (see in particular equation (\ref{imgeodesics}) there), (\ref{ddtargdwdt}) is exactly 
the equation for a geodesic curve when expressed in an arbitrary parameter $t$. 
One may notice that (\ref{dargudt}) (and similarly for (\ref{ddtargdwdt0})) can be written in the parameter-free form 
$$
{d}\arg u+\im (r(w){d{w}})=0,
$$
confirming again the fact that the geometry of the dipole trajectory has a meaning independent of any choice of parameter.

The real part of (\ref{dlogudt}) says, in view of (\ref{rlog}) (or (\ref{rr})), that
$$
\frac{d}{dt}\big(\log|u|+ \log\lambda(w)\big)=0,
$$
the error term $\mathcal{O}(|u|)$ being disregarded.
In other words that \,${|u|}{\lambda (w)}= C$ ({constant}) along each trajectory.
By (\ref{dotu1}) this also gives 
\begin{equation}\label{speed}
\Big|\frac{dw}{dt}\Big| = \frac{C}{\lambda(w)}.
\end{equation}
Thus along each trajectory $\lambda(w)$ has, being proportional to one over the velocity, the same role as the refraction index in optics.

We summarize the most essential parts of the above discussion as

\begin{theorem}\label{thm:kimura}
The dynamical equations for a vortex pair in the dipole limit is identical with the geodesic equation for the metric $ds=\lambda(w)|dw|$, namely
\begin{equation}\label{ddtargdwdt}
\frac{d}{dt}\arg \frac{dw}{dt}+\im (r(w)\frac{d{w}}{dt})=0.
\end{equation} 
Here $w=w(t)$ is the location of the dipole, $t$ is an arbitrarily scaled time parameter chosen such that $dw/dt$ is finite.
The orientation of the dipole is perpendicular to $dw/dt$.
\end{theorem}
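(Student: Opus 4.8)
The plan is to assemble the computation carried out in the preceding discussion into a single chain running from Hamilton's equation to the geodesic equation (\ref{ddtargdwdt}). First I would pass to the center-arrow coordinates $w=\frac{1}{2}(w_1+w_2)$, $u=\frac{1}{2}(w_1-w_2)$ of (\ref{wpmu}), in which the dipole limit is simply $u\to 0$. Because $\Gamma_1=-\Gamma_2$ gives $\Gamma=0$, the last equation of Theorem~\ref{thm:dynamics} forces the circulations $a$, $b$ to be constant in time, so they may be treated as fixed parameters and only $w$ and $u$ are genuine phase-space variables.

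Next I would Taylor expand both the Hamiltonian and the symplectic form in powers of $u$, $\bar{u}$, using the expansions of $h_0$, $\log\lambda$ and the metric affine connection $r(w)=r_{\rm metric}(w)$ recorded above. The decisive observation is that after contracting the matrix (\ref{hamiltonianmatrix}) with the strength vector $(\Gamma_1,-\Gamma_1)$, the regular $h_0$, $h_1$ and connection contributions cancel by antisymmetry, leaving only the singular term $\frac{\Gamma_1^2}{2\pi}(\log\lambda(w)+\log|2u|)$ of (\ref{Hwpmu}); meanwhile the global block built from $P$, $Q$, $R$, $A$, $B$ stays bounded as $u\to 0$ and is therefore negligible against this logarithmic divergence. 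This is exactly where the paradox flagged in the introduction dissolves: the nonlocal harmonic and global contributions drop out and only the local metric datum $\lambda$ survives.

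With the leading Hamiltonian and the expanded symplectic form (via (\ref{expansion lambda})) in hand, I would write out $i(\xi)\Omega=d\mathcal{H}$ and match the coefficients of $dw$, $d\bar{w}$, $du$, $d\bar{u}$ to obtain the pair (\ref{dotu}), (\ref{dotw}). Eliminating $\dot{w}$ between them and dividing by $u$ produces the master equation (\ref{dlogudt}), namely $\frac{d}{dt}\log u + r(w)\frac{dw}{dt}=0$. Taking imaginary parts yields (\ref{dargudt}), and then inserting the directional relation (\ref{arguargdwdt}), $\arg u=\arg\frac{dw}{dt}\pm\frac{\pi}{2}$, which comes from the leading balance (\ref{dotu1}), replaces $\arg u$ by $\arg\frac{dw}{dt}$ and delivers precisely (\ref{ddtargdwdt}).

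The hard part will be the bookkeeping of orders in the dipole limit, specifically the fact that the speed $\frac{dw}{dt}\sim 1/\bar{u}$ itself blows up as $u\to 0$, so that \emph{negligible} must be interpreted with care. The real part of (\ref{dlogudt}) is genuinely singular and only the imaginary part carries clean geometric content; the resolution is to reparametrize time (or rescale $\Gamma_1$ with $u$) so that $\frac{dw}{dt}$ stays finite, after which the relation becomes the parameter-free statement $d\arg u+\im(r(w)\,dw)=0$. Verifying that the discarded $\mathcal{O}(|u|)$ and $\mathcal{O}(|u|^2)$ terms — in particular the quadratic connection term dropped between (\ref{dotlogu}) and (\ref{dlogudt}) — really are of lower order than the retained ones is the delicate step, and the final identification with the geodesic equation then rests on the formula (\ref{imgeodesics}) established in Appendix~1.
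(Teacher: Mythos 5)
Your proposal reproduces the paper's own argument: the paper proves Theorem~\ref{thm:kimura} precisely by passing to the center-arrow coordinates (\ref{wpmu}), contracting the expanded matrix (\ref{hamiltonianmatrix}) with $(\Gamma_1,-\Gamma_1)$ so that only $\frac{\Gamma_1^2}{2\pi}(\log\lambda(w)+\log|2u|)$ survives, matching $i(\xi)\Omega=d\mathcal{H}$ to get (\ref{dotu})--(\ref{dotw}), eliminating $\dot{w}$ to reach (\ref{dlogudt}), and then combining the imaginary part with (\ref{arguargdwdt}) and the reparametrization of time to land on (\ref{imgeodesics}). This is essentially identical to the paper's route, including your correct identification of the delicate points (the blow-up of $dw/dt$ and the order bookkeeping of the discarded terms).
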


\begin{remark}
It is possible to understand why dipole move along geodesics by thinking of vortex pair as a ``wave front'',
in an optical analogy. Equation (\ref{arguargdwdt}) says that the motion is perpendicular to the wave front
(the line segment from $w-u$ to $w+u$). Equation (\ref{dargudt}) then expresses that if the front of a vortex pair is
not aligned with the level line of $\lambda(w)$ then the direction of $u$ (representing the wave front) changes
in such a way that the curve $w(t)$ bends towards higher values of $\lambda$.

Being slightly more direct and exact,
on taking $t$ to be Euclidean arc length the first term in (\ref{ddtargdwdt}) is the ordinary curvature for the curve traced out by $w(t)$.
The second term can be viewed as the inner product between the gradient of $\lambda(w)$, which can be identified with
$\overline{r(w)}$, and $dw/dt$ rotated $90$ degrees to the  right. 
Letting $\theta$ denote the angle between the gradient of $\lambda(w)$ and the velocity vector $dw/dt$ we can write
$$
\im (r(w)\frac{d{w}}{dt})= |\overline{r(w)}  |  |\frac{dw}{dt}| \sin {\theta}.
$$ 
The above remarks are compatible with the laws of optics, for example Fermat's law, and also with ``Snell's law'' (see for example 
\cite{Guillemin-Sternberg-1984, Cawte+-2019}) in the somewhat singular case that $\lambda(w)$
jumps between two constant values.  
\end{remark}


\section{Vortex dipoles and projective connections}\label{sec:dipoles} 

A vortex dipole constructed as a vortex pair melting together is a result of two regularizations: First, a single vortex is already 
regularized in itself by removal of a singularity of type $\log|z-w|$ in the stream function, together with a possible adjustment
with a term related to the curvature of the surface.
This is not so severe, and everyone agrees on the result. The motion of the vortex then comes from the regular terms 
in the Taylor expansion, and it is easy to handle. In particular the speed of the vortex is finite.

When the two vortices in a vortex pair approach each, then a second  
regularization becomes necessary. If one does not want the two vortices to just annihilate each other
in the limit, then one has to let the strengths of the vortices tend to infinity. Even without that, the speed of the vortex
pair will be infinite in the limit, so with the vortex strengths blowing up the speed will become infinite to an even higher degree.
But in any case one can speak of a residual trajectory, which will be a geodesic for the metric of the manifold, as was conjectured in
\cite{Kimura-1999},  proved in \cite{Boatto-Koiller-2013}, and further clarified  in Theorem~\ref{thm:kimura} above.

There is also the possibility of treating the flow directly as a dipole flow, not going via vortex pairs, and this touches on the theory of
projective (or Schwarzian) connections. Below we shall sketch upon such a procedure, but it seems difficult to read off the
trajectory of the dipole from this approach. 

We start from the expression, using (\ref{AdUBdU}),
$$
\nu=\eta-*dG^\omega =
\left( \begin{array}{cc}
                                            B^T, & -A^T  \\
                                              
\end{array} \right)
\left( \begin{array}{cc}
                                            dU_{\alpha}  \\
                                            dU_{\beta} \\
\end{array} \right)
-\Gamma*dG(\cdot,w)
$$
$$
=\left( \begin{array}{cc}
                                            b^T+\oint_{\beta}*dG(\cdot,w), & -a^T-\oint_{\alpha}*dG(\cdot,w)\\
                                              
\end{array} \right)
\left( \begin{array}{cc}
                                            dU_{\alpha}  \\
                                              dU_{\beta} \\
\end{array} \right)
-\Gamma*dG(\cdot,w)
$$
for the flow $1$-form $\nu$ in the single vortex case, and we shall to pass to the dipole limit by differentiation with respect to $w$.
Recall that $\nu$ is a real-valued form: on writing $\nu=\nu_z dz+\nu_{\bar{z}}d\bar{z}$ we have
$\nu_{\bar{z}}=\overline{\nu_z}$. When differentiating with respect to $w$ we get a covariant tensor, denoted
$d_w\nu$, which is real valued in a similar sense. In $d_w\nu$ there will appear  two kinds of differentials:  
$dz$, $d\bar{z}$ are differentials for the flow itself (as a $1$-form) at a general point $z$. 
The dipole is located at $w$ and the differentials $dw$, $d\bar{w}$  represent its orientation. These two kinds of differentials
are combined only via a tensor product (not an antisymmetric wedge product).

On using (\ref{stardG0}), (\ref{dUstardU}), (\ref{dUstardUminus}) one finds that differentiation of $\nu$ with respect to $w$ gives,
writing for simplicity $dzdw$ in place of $dz\otimes dw$ (etc),
$$
d_w \nu(z)=\frac{1}{2}\sum_{j=1}^\texttt{g}\big(dU_{\beta_j}(w)\otimes dU_{\alpha_j}(z)-dU_{\alpha_j}(w)\otimes dU_{\beta_j}(z)\big)+
$$
$$
+\I\Gamma\big( \frac{\partial^2 G(z,w)}{\partial z\partial w}dz dw + \frac{\partial^2 G(z,w)}{\partial z\partial \bar{w}}dz d\bar{w}
- \frac{\partial^2 G(z,w)}{\partial \bar{z}\partial {w}}d\bar{z} d{w} - \frac{\partial^2 G(z,w)}{\partial \bar{z}\partial \bar{w}}d\bar{z} d\bar{w}\big).
$$
Here the first term vanishes in the limit $z\to w$. In the remaining terms we single out the poles as separate terms and let $z\to w$ in what remains. This gives,
on using  (\ref{Hh2}), (\ref{Hh11}), (\ref{Hlambda}) and representing the first term simply by $\mathcal{O}(|z-w|)$ (at the end),
$$
d_w \nu(z)=-\frac{\Gamma}{4\pi\I}\Big(\big(\frac{1}{(z-w)^2}+2 \frac{\partial^2 H(z,w)}{\partial z\partial w}\big)dz dw + 2\frac{\partial^2 H(z,w)}{\partial z\partial \bar{w}}dz d\bar{w}-
$$
$$
-2 \frac{\partial^2 H(z,w)}{\partial \bar{z}\partial {w}}d\bar{z} d{w} 
-\big(\frac{1}{(\bar{z}-\bar{w})^2}+2 \frac{\partial^2 H(z,w)}{\partial \bar{z}\partial \bar{w}}\big)d\bar{z} d\bar{w}\Big)+\mathcal{O}(|z-w|)=
$$
$$
=-\frac{\Gamma}{2\pi}\im\Big(\big(\frac{1}{(z-w)^2}+ \frac{\partial h_1(w)}{\partial w}-2h_2(w)\big)dz dw
-\big( \frac{\partial h_1(w)}{\partial \bar{w}}-2h_{11}(w)\big))dz d\bar{w}\Big)
$$
$$
+\mathcal{O}(|z-w|).
$$

To get the speed of dipole one would like to insert $z=w$ above, but clearly this does not give any sensible result, just a flow $1$-form
in $z$ that becomes infinite at $z=w$. Still this singular flow is associated with a certain direction determined by $dw$, or more precisely by
the action of $dw$ on a vector $\texttt{m}$ representing the orientation of the dipole. See Example~\ref{rem:dipole singularity} below. 
In any case, this singular term overrules all other terms.

Up to a real factor, we thus have above the singularity
$$
\im\frac{dz dw}{(z-w)^2}
$$
plus the finite quantity
\begin{equation}\label{finite part}
\im\Big(\big( \frac{\partial h_1(w)}{\partial w}-2h_2(w)\big)dz dw 
- \big( \frac{\partial h_1(w)}{\partial \bar{w}}-2h_{11}(w)\big) dz d\bar{w}\Big).
\end{equation}
Certainly this finite term will be of minor importance compared with the singular term. The singular term actually determines the
entire flow (at a given instance) once the orientation of the dipole is given and, for example, the imaginary parts of the periods are
given, say are set to zero. Indeed, on letting $dw$ act on a vector $\texttt{m}$ representing the orientation, $\langle dw,{\texttt{m}}\rangle$ 
becomes a complex number and there remains the differentials $dz$ and $d\bar{z}$ representing the flow, and this can then be 
associated with with a unique meromorphic differential on $M$. 

To elaborate the above statements a little, let $\texttt{m}\in\C$ be given. Then
there exists a unique meromorphic differential $\mu=\mu_z dz$ on $M$ of the form
$$
\mu=\frac{{\texttt m}\,dz}{(z-w)^2}+{\rm regular}
$$
and having periods
$$
\im \oint_{\alpha_k} \mu=0, \quad \im \oint_{\beta_k} \mu=0 \quad (k=1,\dots, \texttt{g}).
$$
Such a $\mu$ represents the dipole flow via
$$
\langle d_w\nu , \cdot\otimes  {\texttt{m}} \rangle=\im \mu,
$$
where the left member means that the differentials $dw$ and $d\bar{w}$ in $d_w\nu$ shall act on $\texttt{m}$. In practice this just means that
$dw$ and $d\bar{w}$ shall be replaced by $\texttt{m}$ and $\bar{\texttt{m}}$, respectively, regarded as complex numbers

However, this purely conformal picture (we have not yet used the metric) gives no information of how the dipole is to move.
That must be determined by the metric, and there seems to be no other reasonable possibility than that it shall move along 
the geodesic perpendicular to ${\texttt m}$. In particular, the regular part (\ref{finite part}) seems not to influence the dipole dynamics.
What one can say from a mathematical point of view is that the coefficient in the leading term of (\ref{finite part}),
$$
-\frac{1}{6}q_{\rm robin}(w)=\frac{\partial h_1(w)}{\partial w}-2h_2(w)
$$
is a projective connection, up to a factor, see Lemma~\ref{lem:gammaconnections}. The second term in (\ref{finite part}) is directly
linked to the two metrics involved: by (\ref{dsrobin}), (\ref{dsmetric}), (\ref{Hlambda}) we have
$$
\frac{\partial h_1(w)}{\partial \bar{w}}-2h_{11}(w)=\frac{1}{4}\Delta h_0(w)-\frac{\pi}{V}\lambda(w)^2.
$$
This entire quantity transforms as the density of a metric.
But it is not necessarily positive. In the case of a sphere, for example,
it vanishes (see Example~\ref{ex:sphere}). Also $q_{\rm robin}$ vanishes in the case.  

\begin{example}\label{rem:dipole singularity}
As an example, let the initial condition be that $w=0$ and ${\texttt m}=\frac{\dee}{\dee v} $, where $w=u+\I v$.
Then $\langle dw,{\texttt m}\rangle= \I$,
and the flow $1$-form from the  singularity becomes 
$$
 \im\frac{dz\, \langle dw,{\texttt{m}}\rangle }{(z-w)^2}=\re\frac{dz}{z^2}=\frac{(x^2-y^2)dx-2xy dy}{(x^2+y^2)^2}.
$$
On the $x$-axis  ($y=0$) this is ${dx}/{x^2}$, in other words a flow along the axis with a polar singularity of order two, as expected. 
Indeed, the entire flow is the well-known dipole picture which appears in many applications of conformal mapping.
\end{example}


\section{Remarks on vortex motion in planar domains}\label{sec:Schottky double}

Vortex motion in a planar domain can easily be treated as a special case of vortex motion on Riemann surfaces by
turning to the {\it Schottky double} of the planar domain. For simplicity we shall only discuss the case of one single vortex
in the planar domain. The case of several vortices will be similar in an obvious way. The ideas in this section extend to cases of
vortex motion on general open Riemannian surfaces with analytic boundary. 

Let $\Omega\subset \C$ be the planar domain, assumed to be bounded by finitely many real analytic curves. The Schottky double,
first described in \cite{Schottky-1877},
is the compact Riemann surface $M=\Hat{\Omega}$ obtained by completing $\Omega$ with a ``backside'' $\tilde{\Omega}$,
having the opposite conformal structure, and glueing the two along the common boundary. Thus 
$\Hat{\Omega}=\Omega \cup\partial\Omega\cup \tilde{\Omega}$ in a set theoretic sense, and the conformal structure becomes
smooth over $\partial\Omega$, as can be seen from well-known reflection principles. If $z$ is a point in $\Omega\subset M$,
then $\tilde{z}$ will denote the corresponding (reflected) point on $\tilde{\Omega}\subset M$. Both $z$ and $\tilde{z}$ can also be considered as
points in $\C$, then serving as coordinates of the mentioned points in $M$ (holomorphic respectively anti-holomorphic coordinates), and as such 
they are the same: $z=\tilde{z}\in\C$. 

In our case we need also a Riemannian structure, with a metric. This is to be the Euclidean metric on each of $\Omega$ 
and $\tilde{\Omega}$, but these do not fit smoothly across curved parts of $\partial\Omega$,
it will only be Lipschitz continuous. But that is good enough for our purposes because the vortex will
anyway never approach the boundary.  (In the case of several vortices it is however possible to make up situations in which
vortices do reach the boundary).

The metric on $M=\Hat{\Omega}$ is thus to be 
\begin{equation}\label{metricM}
ds=
\begin{cases}
|dz|, \quad z\in \Omega,\\
|d\tilde{z}|, \quad \tilde{z}\in \tilde{\Omega}.
\end{cases}
\end{equation}
To see how this behaves across
$\partial \Omega$ we need a holomorphic coordinate defined in a full neighborhood of this curve in $M$. A natural candidate can be
defined in terms of the {\it Schwarz function} $S(z)$ for $\partial\Omega$, a function which is defined by its properties of being holomorphic 
in a neighborhood of $\partial\Omega$ in $\C$ and by satisfying
\begin{equation}\label{Schwarz}
S(z)=\bar{z} \quad \text{on }\partial\Omega.
\end{equation} 
See \cite{Davis-1974, Shapiro-1992} for details about $S(z)$. We remark that 
$z\mapsto \overline{S(z)}$ is the (local) anti-conformal reflection map in $\partial\Omega$
and that $S'(z)=T(z)^{-2}$, where
$T(z)$ is the positively oriented and holomorphically extended unit tangent vector on $\partial\Omega$.

The complex coordinate $z$ in $\Omega$ extends, as a holomorphic function, to a full neighborhood of $\Omega\cup\partial\Omega$,
both when this neighborhood is considered as a subset of $\C$ and when it is considered as a subset of $M$.
The first case is trivial, and the second case depends on $\partial\Omega$ being analytic. In terms of the Schwarz function
this second extension is given by
\begin{equation}\label{extension}
z=
\begin{cases}
z\quad &\text{for }z\in\Omega\cup \partial\Omega,\\
\overline{S({\tilde{z}})}\quad&\text{for }\tilde{z}\in \tilde{\Omega},\text{ close to }\partial\Omega.
\end{cases}
\end{equation} 
In the latter expression, $\overline{S(\tilde{z})}$, $\tilde{z}$ is to be interpreted as a complex number.
When the metric on $M$ is expressed in the coordinate (\ref{extension}) it becomes
\begin{equation}\label{coordinate metric}
ds=
\begin{cases}
|dz|\quad &\text{for }z\in\Omega\cup \partial\Omega,\\
|S^\prime({z})||d{z}|\quad&\text{for }{z}\in \C\setminus\overline{\Omega},\text{ close to }\partial\Omega.
\end{cases}
\end{equation}
In the second case, $z=\overline{S(\tilde{z})}$, $\tilde{z}\in\tilde{\Omega}$, whereby 
$\tilde{z}=\overline{S(z)}$ and so $|d\tilde{z}|=|S^\prime({z})||d{z}|$. 
Thus (\ref{coordinate metric}) is consistent with (\ref{metricM}).
We see from the coordinate representation (\ref{coordinate metric}) that the metric is only Lipschitz continuous across $\partial\Omega$. 
This is the best one can expect.

The associated affine connection (\ref{rr}) (or (\ref{rlog})) is in the coordinate (\ref{extension}) given by
$$
r(z)=
\begin{cases}
0 \quad &\text{for }z\in\Omega\cup \partial\Omega,\\
\{S({z}),{z}\}_1\quad&\text{for }{z}\in \C\setminus\overline{\Omega},\text{ close to }\partial\Omega,
\end{cases}
$$
where (see Appendix~1, Section~\ref{sec:connections}, for notations)
\begin{equation}\label{rSz}
\{S(z),z\}_1=\frac{S''(z)}{S'(z)}=-2\frac{T'(z)}{T(z)}.
\end{equation}
Thus $r(z)$ is discontinuous across $\partial\Omega$ and it should on this curve be represented by its mean-value 
\begin{equation}\label{rmv}
r_{\rm MV}(z)=\frac{1}{2}\{S(z),z\}_1=-\frac{T'(z)}{T(z)} \quad (z\in\partial\Omega).
\end{equation}

\begin{example}
Let $\Omega=\D$. Then $S(z)=\frac{1}{z}$ and the coordinate $z$ in (\ref{extension}) extends to the entire complex plane, thus
representing all of $M=\D \cup \dee\D \cup\tilde{\D}$ except for the point $\tilde{0}\in\tilde{\D}$. And the metric expressed in this coordinate
becomes
$$
ds=
\begin{cases}
|dz|,\quad &|z|\leq 1,\\
|z|^{-2}{|d{z}|},\quad&|z|>1.
\end{cases}
$$
The affine connection similarly becomes, including the mean-value on the boundary, 
$$
r(z)=
\begin{cases}
0,\quad &|z|< 1,\\
-z, \quad & |z|=1,\\
-2z,\quad&|z|>1.
\end{cases}
$$
\end{example}

The geodesic curves in $\Omega$ are of course the (Euclidean) straight lines (similarly in $\tilde{\Omega}$),
geodesic curves crossing $\partial\Omega$ are straight lines reflecting into the other side under equal angles on $\partial\Omega$
(just as ordinary optical reflection),
while $\partial\Omega$ is in itself a geodesic curve. The latter is intuitively obvious since at any point on
$\partial\Omega$ there should be one geodesic in the tangential direction, and this has no other way to go
than to follow the boundary.

To confirm the last statement analytically,  let $t$ be an arc length (with respect to the Euclidean metric) 
parameter along $\partial\Omega$, so that $T(z)=\frac{dz}{dt}$ on $\partial\Omega$.
The curvature $\kappa$ of $\partial\Omega$ is
$$
\kappa= \frac{d}{dt}\arg \frac{dz}{dt} 
\quad z\in\partial\Omega,
$$
and using that $T(z)\overline{T(z)}=1$ on $\partial\Omega$ one finds that
$$
T'(z)=\I \kappa \quad (z\in \partial\Omega).
$$
In particular $T'(z)$, and so $r_{\rm MV}(z)T(z)$ (by (\ref{rmv})), is purely imaginary on $\partial\Omega$.
Combining with (\ref{rmv}) it follows that
$$
\frac{d}{dt}\arg \frac{dz}{dt}={\I}r_{\rm MV}(z)T(z),
$$
hence
$$
\frac{d}{dt}\arg \frac{dz}{dt}+{\im}\big(r_{\rm MV}(z)T(z)\big)=0 \quad (z\in\partial\Omega).
$$
Thus the geodesic equation (\ref{ddtargdwdt}) holds for the curve $\partial\Omega$, as claimed.

We remark also that the curvature $\kappa$ of the boundary curve $\partial\Omega$ appears also in the expression for the Gaussian curvature
for the metric on $M$. That curvature vanishes on $\Omega$ and on $\tilde{\Omega}$, whereas it on $\partial\Omega$
has a singular contribution, with density  $2\kappa$ with respect arc-length measure on $\partial\Omega$.

A single vortex in a planar domain $\Omega$ moves along a level line of
the appropriate Robin function, or Routh's stream function \cite{Lin-1941a, Lin-1941b, Lin-1943}).
If the vortex is close to the boundary then it follows the boundary
closely, with high speed.  From the perspective of the Schottky double 
the boundary conditions for planar fluid motion are such that there is automatically a mirror vortex on the other side in the
double, thus we really have a vortex pair close to $\partial\Omega$ on the double. In the limit this becomes a vortex dipole,
moving with infinite speed along the geodesic $\partial\Omega$.

Considering in some more detail such a symmetric vortex pair, with vortex  locations $w\in\Omega$ and $\tilde{w}\in \tilde\Omega$,
we first notice that the Green function $G^\omega(z)$ for $\omega=\delta_w-\delta_{\tilde{w}}$ simply is the anti-symmetric
extension to the Schottky double of the ordinary Green function $G_\Omega(z,w)$ for  $\Omega$:
$$
G^{\delta_w-\delta_{\tilde{w}}}(z)= G_\Omega(z,w)\quad (z\in\Omega).
$$
Then the stream function $\psi$ 
in (\ref{Ghydro}) becomes what is sometimes called the {\it hydrodynamic Green function}, which depends on the prescribed
periods. This function, which  can be traced back (at least in special cases) to \cite{Koebe-1916, Lin-1943}, 
has more recently been discussed in for example \cite{Cohn-1980, Flucher-Gustafsson-1997, Flucher-1999, Gustafsson-Sebbar-2012}.

We wish to clarify how this hydrodynamic Green function is related to the modification, in the beginning of Section~\ref{sec:hamiltonian},
of the general Green function flow $-*dG^\omega$ by an additional term $\eta$. To do this we fix, 
in the case of a Schottky double $M=\hat{\Omega}$, the homology basis in such a way that the curves $\beta_j$,
$j=1,\dots,\texttt{g}$, closely follow the inner components of $\partial\Omega$, and each curve $\alpha_j$ goes from the
$j$:th inner component of $\dee\Omega$ through $\Omega$ to the outer component, and then back again on the backside.

We also introduce the harmonic measures
$u_j$, $j=1,\dots,\texttt{g}$, here defined to be those harmonic functions in $\Omega$ which takes the boundary value one on 
the designated (number $j$)  inner component of
$\partial\Omega$ and vanishes on the rest of $\partial\Omega$. Their differentials $du_j$ extend harmonically to the Schottky double
with $\oint_{\alpha_k} du_j=-2\delta_{kj}$,  $\oint_{\alpha_k} du_j=0$. Thus $du_j=-2dU_{\beta_j}$ in terms of our general notations
(as in (\ref{taualpha}), (\ref{taubeta})).

In the block matrix notation of (\ref{RQPR}) we have
$$
-*dU_\beta=P \,dU_\alpha+R\,dU_\beta,
$$
where $P=(P_{kj})$, $R=(R_{kj})$ and (see (\ref{taualpha}), (\ref{taubeta}))
$$
P_{kj}=-\oint_{\beta_j} *dU_{\beta_k}, \quad R_{kj}=\oint_{\alpha_j}*dU_{\beta_k}.
$$
The last integral can be written 
$$
R_{kj}=-\frac{1}{2}\oint_{\alpha_j}*du_k
=-\frac{1}{2}\int_{\alpha_j\cap \Omega} \frac{\dee u_k}{\dee n}ds-\frac{1}{2}\int_{\alpha_j\cap \tilde{\Omega}} \frac{\dee u_k}{\dee n}ds,
$$
and it is easy to see that it is zero because of cancelling contributions from $\Omega$ and $\tilde{\Omega}$
due to the symmetry of $du_k$ and $\alpha_j$ going the opposite way on the backside. 

As a general conclusion we therefore have that $R=0$ in the matrix (\ref{RQPR}) when $M$ is the Schottky double of a planar domain. As a consequence,
$$
-*dU_\beta=P\,dU_\alpha.
$$
Similarly, the other equation contained in (\ref{RQPR}) becomes
\begin{equation}\label{stardUqdU}
*dU_\alpha=Q\,dU_\beta.
\end{equation}

Turning now to flow $\eta$ in (\ref{etaAB}), this is necessarily symmetric on $M=\hat{\Omega}$, hence
$$
\oint_{\alpha_j}\eta=0.
$$
It follows that the coefficients $A_j$ in (\ref{etaAB}) vanish, whereby that equation becomes
$$
\eta= \sum_{j=1}^\texttt{g}B_j dU_{\alpha_j}.
$$
In terms of the stream function $\psi=G^\omega+\psi_0$ (see again (\ref{Ghydro})) this gives, inserting also (\ref{stardUqdU}),
$$
d\psi_0=*\eta=\sum_{j=1}^\texttt{g} B_j *dU_{\alpha_j}=\sum_{j=1}^\texttt{g} C_j dU_{\beta_j}=d\big(-\frac{1}{2}\sum_{j=1}^\texttt{g} C_j u_{j} \big),
$$
with $C_j=\sum_{i=1}^{\texttt{g}} B_i Q_{ij}$. It follows in particular that $\psi_0$, and hence all of $\psi$, is single-valued on $\Omega$.

In summary, the total stream function is 
$$
\psi (z) =G_\Omega({z,w})+\sum_{j=1}^\texttt{g} C_j U_{\beta_j}(z),
$$
and it is single-valued when restricted to $\Omega$. It is clear that the $C_j$ will actually depend on $w$, 
and for symmetry reasons the above formula  eventually takes the form
$$
\psi (z) =G_\Omega({z,w})+\sum_{i,j=1}^\texttt{g} C_{ij} U_{\beta_i}(z)U_{\beta_j}(w)
$$
for some constants $C_{ij}$ subject to $C_{ij}={C_{ji}}$. 


\section{Appendix 1: Affine and projective connections}\label{sec:connections}

Besides differential forms, and tensor fields in general,
affine and projective connections are quantities on Riemann surfaces which are relevant for point vortex motion. Therefore we give below
a short introduction to these notions. The affine connections have the same meanings as in ordinary differential geometry, used to define 
covariant derivatives for example, and they play an important role in many areas of mathematical physics. 
Projective connections have more recently become important, in for example conformal field theory.

Some general references for the kind of connections we are going to consider are  
\cite{Schiffer-Hawley-1962, Gunning-1966, Gunning-1967, Gunning-1978, Gustafsson-Sebbar-2012}.  
We define them in the simplest possible manner, namely as quantities defined in terms of local holomorphic coordinates and transforming in
specified ways when changing from one coordinate to another.

Let $\tilde{z}=\varphi(z)$ represent a holomorphic local change of complex coordinate on a Riemann surface $M$ and define three nonlinear differential expressions
$\{\cdot,\cdot\}_k$, $k=0,1,2$, by
\begin{align*}
\{\tilde{z},z\}_0 &= \log \varphi'(z)&&=-2 \log \frac{1}{\sqrt{\varphi'}}\\
\{\tilde{z},z\}_1 &= (\log \varphi'(z))' =\frac{\varphi''}{\varphi'}&&= -2 \sqrt{\varphi'}\,\,(\frac{1}{\sqrt{\varphi'}})'\\
\{\tilde{z},z\}_2 &= (\log \varphi'(z))''-\frac{1}{2}((\log \varphi'(z))')^2=\frac{\varphi'''}{\varphi'}-\frac{3}{2}(\frac{\varphi''}{\varphi'})^2 &&= -2 \sqrt{\varphi'}\,\,(\frac{1}{\sqrt{\varphi'}})''
\end{align*}
The last expression is the \emph{Schwarzian derivative} of $\varphi$. For
$\{\tilde{z},z\}_0$ there is an additive indetermincy of $2\pi \I$, so
actually only its real part, or exponential, is completely well-defined.

As alternative definitions we have, with $\tilde{z}=\varphi(z)$, $\tilde{w}=\varphi(w)$, 
\begin{align*}
\{\tilde{w},w\}_0&=\lim_{z\to w} \log\frac{\tilde{z}-\tilde{w}}{z-w},\\
\{\tilde{w},w\}_1&=2\lim_{z\to w} \frac{\partial}{\partial z}\log\frac{\tilde{z}-\tilde{w}}{z-w},\\
\{\tilde{w},w\}_2&=6\lim_{z\to w} \frac{\partial^2}{\partial z\partial w}\log\frac{\tilde{z}-\tilde{w}}{z-w}.
\end{align*}

The following chain rules hold, if $z$ depends on $w$ via an intermediate variable $u$:
\begin{equation*}
\{z,w\}_k (dw)^k=\{z,u\}_k (du)^k+\{u,w\}_k (dw)^k \quad (k=0,1,2).
\end{equation*}
In particular,
\begin{equation*}
\{z,w\}_k (dw)^k=-\{w,z\}_k (dz)^k \quad (k=0,1,2).
\end{equation*}
It turns out that the three operators $\{\cdot,\cdot\}_k$, $k=0,1,2$, are unique in having properties as above,
i.e., one cannot go on with anything similar for $k\geq 3$. See \cite{Gunning-1966, Gunning-1967} for details. 

\begin{definition}\label{def:affine}
An {\it affine connection} (or $1$-{\it connection}) on $M$ is an object which is represented
by local differentials $r(z)dz$,
$\tilde{r}(\tilde{z})d\tilde{z}$,\dots (one in each coordinate
variable, and not necessarily holomorphic) glued together according to the rule
\begin{equation*}
\tilde{r}({\tilde{z}}){d\tilde{z}}=r(z){dz}-\{\tilde{z},z\}_1\,{dz}.
\end{equation*}
\end{definition}

\begin{definition}\label{def:projective}
A {\it projective connection} (or {\it Schwarzian connection}, or $2$-{\it connection}) on $M$ consists of local
quadratic differentials $q(z)(dz)^2$,
$\tilde{q}(\tilde{z})(d\tilde{z})^2$, \dots, glued together
according to
\begin{equation*}
\tilde{q}({\tilde{z}})({d\tilde{z}})^2=q(z)({dz})^2-\{\tilde{z},z\}_2\,({dz})^2.
\end{equation*}
\end{definition}

One may also consider $0$-{\it connections}, assumed here to be real-valued.
Such a connection $p(z)$ transforms according to 
\begin{equation*}
\tilde p (\tilde z)= p(z) -\re\{\tilde z, z\}_0.
\end{equation*}
This means exactly that 
\begin{equation*}
ds =e^{p(z)}{|dz|}.
\end{equation*}
is a Riemannian metric.

For a metric in general, $ds =\lambda(z){|dz|}=e^{p(z)}|dz|$, there is a natural affine connection associated  to it by
\begin{equation}\label{rlog}
r(z)= 2 \frac{\partial }{\partial z}\log \lambda(z)=2\frac{\partial p}{\partial z}
=\frac{\partial p}{\partial x}-\I \frac{\partial p}{\partial y}.
\end{equation}
This can be identified with the Levi-Civita connection in general tensor analysis.
The real and imaginary parts coincide (up to sign) with the components 
of the classical Christoffel symbols $\Gamma_{ij}^k$.  The Gaussian curvature of the metric is 
$$
\kappa(z)=-4\lambda(z)^{-2}\frac{\partial^2}{\partial z\partial \bar{z}}\log\lambda(z)=-2{\lambda(z)^{-2}}\frac{\partial r(z)}{\partial\bar{z}}.
$$
Under conformal changes of coordinates $\kappa$ transforms as a scalar: $\tilde{\kappa}(\tilde{z})=\kappa (z)$ in previous notation.

Independent of any metric, an affine connection $r$ gives rise to a projective connection $q$ by
\begin{equation}\label{qr}
q(z)= \frac{\partial r(z)}{\partial z}- \frac{1}{2}r(z)^2.
\end{equation}
This $q$ is sometimes called the ``curvature" of $r$ (see \cite{Dubrovin-1993}). That curvature is however not the same as the Gaussian curvature in case
$r(z)$ comes form a metric. A polarized version of $q$ is
\begin{equation}\label{qpolarized}
q(z,w)= \frac{1}{2}\big(\frac{\partial r(z)}{\partial z}+\frac{\partial r(w)}{\partial w}-r(z)r(w)\big),
\end{equation}
related to the Batman function in \cite{Boatto-Koiller-2013}.

The equation for geodesic curves $z=z(t)$ is, in terms of an arc-length parameter $t$,
$$
\frac{d^2 z}{dt^2 }+r(z)(\frac{dz}{dt})^2=0
$$
or, written in another way,
\begin{equation}\label{geodesics}
\frac{d}{dt}\log \frac{dz}{dt}+r(z)\frac{dz}{dt}=0.
\end{equation}
The first version is just a reformulation of the usual equation in terms of Christoffel functions in ordinary differential geometry (see \cite{Frankel-2012}).

The real part of (\ref{geodesics}) only contains information about the parametrization, while the imaginary part, namely
\begin{equation}\label{imgeodesics}
\frac{d}{dt}\arg \frac{dz}{dt}+\im\big( r(z)\frac{dz}{dt}\big)=0.
\end{equation}
describes the geodesic curve in terms of an arbitrary real parameter $t$. The latter property is useful in the context of the motion
of vortex pairs since the speed of such a pair tends to infinity as the distance between the two vortices goes zero, and time therefore 
has to be successively reparametrized.


\section{Appendix 2: Behavior of singular parts under changes of coordinates}\label{sec:singular parts} 

The stream function $\psi$  for a flow in a neighborhood of a point vortex with location $w$
has an expansion starting, in local coordinates and up to a constant factor depending on the strength of the vortex,
$$
\psi(z)=\log|z-w|-c_0(w) -\mathcal{O}(|z-w|).
$$
Similarly, the expansion of the analytic completion $\nu+\I*\nu=2\I\frac{\partial\psi}{\partial{z}}dz$ of the flow $1$-form $\nu$ 
is (again up to a constant factor)
$$
f(z)dz=\frac{dz}{z-w} -c_1(w)dz -\mathcal{O}(z-w).
$$ 

One step further, one may consider a pure vortex dipole. The corresponding flow is
obtained by differentiating the above $\nu=\re \big(f(z)dz\big)$  with respect to $a$.
The analytic completion $F(z)dz dw$ of the so obtained flow $1$-form $d_w\nu$ has an expansion 
$$
F(z)dz dw= \frac{dz dw}{(z-w)^2}-2c_2(w)dz dw + \mathcal{O}(z-w).
$$
Above, and always in similar situations, $dzdw$ should be interpreted as the tensor product $dz\otimes dw$, i.e. it is not a wedge product. 

The coefficients $c_0$, $c_1$, $c_2$ above transforms, under conformal changes of coordinates, as different kinds of connections
as defined in Appendix, Section~\ref{sec:connections}. 
This is made precise in the following elementary lemma, which we here cite without proof from \cite{Gustafsson-2019}.
Compare Lemma~3.6 and subsequent discussions in \cite{Biswas-Raina-1996}.

\begin{lemma}\label{lem:connections}
Let $\tilde{z}=\varphi(z)$ be a local conformal map representing a change of coordinates near a point $z=w$, and set $\tilde{w}=\varphi (w)$.
Let $\psi(z)$ ($=\psi(z,w)$) be a locally defined real-valued harmonic function with a logarithmic pole at $z=w$,  
similarly $f(z)dz$ a meromorphic differential with a simple pole 
at the same point, and $F(z)dzdw$ a double differential with a pure (residue free) second order pole.
Precisely, we assume the following local forms, in the $z$ and $\tilde{z}$ variables:
\begin{align*}
\psi(z)&= \log|z-w|-c_0(w) + \mathcal{O}(|{z}-{w}|)\\
&=\log|\tilde{z}-\tilde{w}|-\tilde{c}_0(\tilde{w})+ \mathcal{O}(|\tilde{z}-\tilde{w}|),\\
f(z)dz&= \frac{dz}{z-w}-c_1(w)dz + \mathcal{O}(z-w)\\
&=\frac{d\tilde{z}}{\tilde{z}-\tilde{w}}-\tilde{c}_1(\tilde{w})d\tilde{z}+ \mathcal{O}(\tilde{z}-\tilde{w}),\\
F(z)dzdw&= \frac{dzdw}{(z-w)^2}-2c_2(w)dzdw + \mathcal{O}(z-w)\\
&=\frac{d\tilde{z}d\tilde{w}}{(\tilde{z}-\tilde{w})^2}-2\tilde{c}_2(\tilde{w})d\tilde{z}d\tilde{w}+ \mathcal{O}(\tilde{z}-\tilde{w}).
\end{align*}
Then, as functions of the location of the singularity and up to constant factors, $c_0$ transforms as a $0$-connection,
$c_1$ as an affine connection and $c_2$ as a projective connection:
$$
\tilde{c}_0(\tilde{w})= c_0(w)+\re\{\tilde{w},w\}_0, 
$$
$$
\tilde{c}_1(\tilde{w}) d\tilde{w}= c_1 (w)dw +\frac{1}{2} \{\tilde{w}, w\}_1 dw,
$$
$$
2\tilde{c}_2(\tilde{w}) d\tilde{w}^2= 2c_2 (w)dw^2 +\frac{1}{6} \{\tilde{w}, w\}_2 dw^2.
$$
The first statement is most conveniently expressed by saying that
$$
ds=  e^{-\tilde{c}_0(\tilde{w})}|d\tilde{w}|=e^{-c_0(w)}|dw|
$$
defines a conformally invariant metric.
\end{lemma}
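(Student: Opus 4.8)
The plan is to exploit the fact that each of the three objects $\psi$, $f(z)\,dz$ and $F(z)\,dz\,dw$ is intrinsically defined, independent of the chosen coordinate, so that the two displayed local forms (one in $z$, one in $\tilde z$) represent \emph{the same} quantity. Equating the two representations and cancelling the singular part will then isolate the connection term, and the alternative definitions of $\{\cdot,\cdot\}_k$ given in Appendix~1 as limits of $z$- and $w$-derivatives of $\log\frac{\tilde z-\tilde w}{z-w}$ are precisely engineered to deliver exactly those terms. Thus the whole lemma reduces to three essentially identical limit computations, of increasing order in the number of derivatives.

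For $c_0$ I would simply equate the two expansions of $\psi$ and let $z\to w$. The remainders $\mathcal O(|z-w|)$ and $\mathcal O(|\tilde z-\tilde w|)$ drop out, leaving
\[
\tilde c_0(\tilde w)-c_0(w)=\lim_{z\to w}\bigl(\log|\tilde z-\tilde w|-\log|z-w|\bigr)=\re\lim_{z\to w}\log\frac{\tilde z-\tilde w}{z-w}=\re\{\tilde w,w\}_0,
\]
which is the first alternative definition; the $2\pi\I$ indeterminacy of the complex logarithm is harmless since only the real part enters.

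For $c_1$ and $c_2$ the steps are the same in spirit. I would write $\tilde f(\tilde z)\,d\tilde z=f(z)\,dz$, substitute $d\tilde z=\varphi'(z)\,dz$ and $\tilde z-\tilde w=\varphi(z)-\varphi(w)$, divide out $dz$, and cancel the matched $1/(z-w)$ pole; the surviving limit is
\[
\tilde c_1(\tilde w)\,\varphi'(w)=c_1(w)+\lim_{z\to w}\Bigl(\frac{\varphi'(z)}{\varphi(z)-\varphi(w)}-\frac{1}{z-w}\Bigr)=c_1(w)+\tfrac12\{\tilde w,w\}_1 ,
\]
since the limit is exactly $\lim_{z\to w}\frac{\partial}{\partial z}\log\frac{\tilde z-\tilde w}{z-w}$; multiplying by $dw$ and using $d\tilde w=\varphi'(w)\,dw$ gives the stated law. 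The $c_2$ case clears the double pole in the same way, and the residual limit is $\lim_{z\to w}\bigl(\frac{\varphi'(z)\varphi'(w)}{(\varphi(z)-\varphi(w))^2}-\frac{1}{(z-w)^2}\bigr)=\lim_{z\to w}\frac{\partial^2}{\partial z\partial w}\log\frac{\tilde z-\tilde w}{z-w}=\tfrac16\{\tilde w,w\}_2$, which after multiplication by $dw^2$ and use of $d\tilde w^2=\varphi'(w)^2\,dw^2$ yields $2\tilde c_2(\tilde w)\,d\tilde w^2=2c_2(w)\,dw^2+\tfrac16\{\tilde w,w\}_2\,dw^2$.

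The main obstacle, and really the only point requiring care, is the bookkeeping of remainders. In each case I must check that the difference of the two singular parts extends \emph{holomorphically} across $z=w$, so that the constant term — the limit read off above — genuinely exists and equals the connection, and that the hypothesized $\mathcal O$-remainders remain $\mathcal O(z-w)$ after the substitution $\tilde z=\varphi(z)$ and therefore vanish in the limit. For $c_1$ and $c_2$ this is the elementary Taylor expansion of $\varphi$ and $\varphi'$ about $w$; the pole-cancellations are forced and the surviving constant is exactly one of the alternative definitions of $\{\cdot,\cdot\}_k$. Finally, the concluding assertion that $ds=e^{-c_0(w)}|dw|$ is well defined is a direct restatement of the $c_0$ transformation law matched against the $0$-connection rule $\tilde p(\tilde z)=p(z)-\re\{\tilde z,z\}_0$ of Appendix~1, with $p=-c_0$.
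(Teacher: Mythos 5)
Your argument is correct and complete: the paper itself cites this lemma without proof (from the earlier reference), but your computation is clearly the intended one, since the ``alternative definitions'' of $\{\cdot,\cdot\}_k$ as limits of $\log\frac{\tilde z-\tilde w}{z-w}$ and its $z$- and $zw$-derivatives given in Appendix~1 are stated precisely so that equating the two local forms and cancelling the matched poles yields the three transformation laws directly, with the factors $\varphi'(w)$ and $\varphi'(w)^2$ absorbed into $d\tilde w$ and $d\tilde w^2$. Your attention to the remainders (that $\mathcal{O}(\tilde z-\tilde w)=\mathcal{O}(z-w)$ under a conformal change, and that $\varphi'(z)=\varphi'(w)+\mathcal{O}(z-w)$ in the constant terms) is exactly the bookkeeping needed; nothing is missing.
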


\begin{remark}
It is actually not necessary for the conclusions of the lemma that $\psi$, $fdz$, $Fdzdw$ are harmonic/analytic away from the singularity,
it is enough that the local forms of the singularity and constant terms given as above hold.
\end{remark}

We may adapt the above lemma to the Green function $G(z,w)=G^{\delta_w} (z)$, despite it  is not harmonic in $z$
(because of the compensating background flow).  First we write
\begin{equation}\label{GlogH1}
G(z,w)=\frac{1}{2\pi} (-\log |z-w| +H(z,w))
\end{equation}
and then expand the regular part as
\begin{equation}\label{greentaylor1}
H(z,w)= h_0 (w)+\frac{1}{2}\left(h_1(w)(z-w)+ \overline{h_1 (w)} (\bar{z}-\bar{w})\right)+
\end{equation}
$$
+\frac{1}{2}\left(h_{2}(w)(z-w)^2+\overline{h_{2}(w)}(\bar{z}-\bar{w})^2\right)+h_{11}(w)(z-w)(\bar{z}-\bar{w})+ \mathcal{O}(|z-w|^3).
$$

We note from (\ref{greentaylor}) that
$$
H(w,w)=h_0(w),\quad
\{\frac{\partial H(z,w)}{\partial z} \}_{z=w} =\frac{1}{2} h_{1}(w).
$$
Thus the symmetry of $H(z,w)$ gives
\begin{equation}\label{c1c0}
h_1(w)=\frac{\partial h_0 (w)}{\partial w},
\end{equation}
For the second order derivatives we have
\begin{equation}\label{Hlambdaz}
\{\frac{\partial^2 H(z,w)}{\partial z^2}\}_{z=w} = h_{2}(w),\qquad\qquad\quad
\end{equation}
\begin{equation}\label{Hlambda}
\{\frac{\partial^2 H(z,w)}{\partial z \partial\bar{z}}\}_{z=w} = h_{11}(w)=\frac{\pi}{2 V}\lambda(w)^2,
\end{equation}
\begin{equation}\label{Hh2}
\{\frac{\partial^2 H(z,w)}{\partial z \partial w}\}_{z=w} =\frac{1}{2}\frac{\partial h_1(w)}{\partial w} -h_{2}(w),
\end{equation}
\begin{equation}\label{Hh11}
\{\frac{\partial^2 H(z,w)}{\partial z \partial\bar{w}}\}_{z=w} = \frac{1}{2}\frac{\partial h_1(w)}{\partial \bar{w}} -h_{11}(w).
\end{equation}

The transformation properties of the coefficients $h_j(w)$ are closely related
to those appearing in Lemma~\ref{lem:connections}. Citing from \cite{Gustafsson-2019} we have 
\begin{lemma}\label{lem:gammaconnections}
Under a local holomorphic change 
$\tilde{z}=\varphi(z)$ of coordinates, with $\tilde{w}=\varphi (w)$, we have
\begin{equation}\label{h0}
\tilde{h}_0(\tilde{w})= h_0(w)+\re\{\tilde{w},w\}_0, 
\end{equation}
\begin{equation}\label{h1}
\tilde{h}_1(\tilde{w}) d\tilde{w}= h_1 (w)dw +\frac{1}{2} \{\tilde{w}, w\}_1 dw,
\end{equation}
\begin{equation}\label{h2h1}
\big(\frac{\partial \tilde{h}_1(\tilde{w})}{\partial \tilde{w}}-2\tilde{h}_2(\tilde{w}) \big)d\tilde{w}^2
= \big(\frac{\partial h_1(w)}{\partial w}-2{h}_2(w) \big)d{w}^2 +\frac{1}{6} \{\tilde{w}, w\}_2 dw^2,
\end{equation}
\begin{equation}\label{h11}
\tilde{h}_{11}(\tilde{w})|d\tilde{w}|^2=h_{11}(w)|dw|^2.
\end{equation}
\end{lemma}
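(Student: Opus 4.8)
The plan is to exploit that the Green function $G(z,w)$ is an intrinsic object on $M\times M$: it is defined through the energy pairing (\ref{Gab}) together with the metric normalization (\ref{tvol}), neither of which refers to a coordinate, so it transforms as a genuine scalar in each of its two arguments. Hence under a holomorphic change $\tilde z=\varphi(z)$, $\tilde w=\varphi(w)$ one has $\tilde G(\tilde z,\tilde w)=G(z,w)$, and inserting the splitting (\ref{GlogH1}) on both sides produces the master identity
$$
\tilde H(\varphi(z),\varphi(w))-H(z,w)=\log\Big|\frac{\tilde z-\tilde w}{z-w}\Big|=\re\,\Phi(z,w),\qquad \Phi(z,w):=\log\frac{\varphi(z)-\varphi(w)}{z-w}.
$$
Since $\varphi'(w)\neq 0$, the function $\Phi$ is holomorphic in $z$ and in $w$ separately near the diagonal, so $\re\Phi$ is harmonic in $z$; this last observation drives the whole argument. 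The alternative limit definitions of $\{\tilde w,w\}_0,\{\tilde w,w\}_1,\{\tilde w,w\}_2$ recorded in Appendix~1 are precisely the quantities $\Phi|_{z=w}$, $2\,\partial_z\Phi|_{z=w}$, and $6\,\partial_z\partial_w\Phi|_{z=w}$, which is what makes the extraction clean.

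First I would read off (\ref{h0}) by setting $z=w$ in the master identity: the left side is $\tilde h_0(\tilde w)-h_0(w)$ by $H(w,w)=h_0(w)$, while the right side is $\re\Phi(w,w)=\re\log\varphi'(w)=\re\{\tilde w,w\}_0$. Next, to obtain (\ref{h1}) I would apply $\partial/\partial z$ and set $z=w$, using $\{\partial_z H\}_{z=w}=\tfrac12 h_1$ from (\ref{greentaylor1}) and the chain rule $\partial_z[\tilde H(\varphi(z),\varphi(w))]=(\partial_{\tilde z}\tilde H)\,\varphi'(z)$. Because $\Phi$ is holomorphic in $z$ one has $\partial_z\re\Phi=\tfrac12\partial_z\Phi$, and $\partial_z\Phi|_{z=w}=\tfrac12\{\tilde w,w\}_1$; clearing the common factor $\tfrac12$ and multiplying through by $dw$ via $d\tilde w=\varphi'(w)dw$ yields (\ref{h1}).

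The projective statement (\ref{h2h1}) follows the same way by applying the mixed operator $\partial^2/\partial z\,\partial w$ and setting $z=w$. Here I would use (\ref{Hh2}), which gives $\{\partial_z\partial_w H\}_{z=w}=\tfrac12\partial_w h_1-h_2$, together with $\partial_z\partial_w\re\Phi=\tfrac12\partial_z\partial_w\Phi$ (again because $\Phi$ is bi-holomorphic near the diagonal) and $\partial_z\partial_w\Phi|_{z=w}=\tfrac16\{\tilde w,w\}_2$. This produces the scalar relation $(\partial_{\tilde w}\tilde h_1-2\tilde h_2)\varphi'(w)^2=(\partial_w h_1-2h_2)+\tfrac16\{\tilde w,w\}_2$; multiplying by $dw^2$ and using $d\tilde w^2=\varphi'(w)^2\,dw^2$ gives (\ref{h2h1}). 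Finally (\ref{h11}) is the cleanest: applying $\partial^2/\partial z\,\partial\bar z$ annihilates the harmonic term $\re\Phi$ entirely, so $\{\partial_z\partial_{\bar z}\tilde H\}_{\tilde z=\tilde w}\,|\varphi'(w)|^2=\{\partial_z\partial_{\bar z}H\}_{z=w}$, and (\ref{Hlambda}) identifies both sides with $h_{11}$, giving the tensorial law $\tilde h_{11}|d\tilde w|^2=h_{11}|dw|^2$.

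The only genuine hazard — and the step where I expect to be most careful — is tracking the interlocking factors of $\tfrac12$ arising from three distinct sources: the Wirtinger derivatives, the $\tfrac12(\Phi+\bar\Phi)$ in taking the real part, and the prefactors $1,2,6$ built into the limit definitions of $\{\tilde w,w\}_0,\{\tilde w,w\}_1,\{\tilde w,w\}_2$. Matching these against the coefficient normalizations $\tfrac12 h_1$, $\tfrac12\partial_w h_1-h_2$, and $h_{11}$ recorded in (\ref{greentaylor1})--(\ref{Hh11}) is exactly what makes the clean prefactors $\re$, $\tfrac12$, $\tfrac16$, $1$ appear in the four transformation laws; everything else is routine differentiation and the chain rule.
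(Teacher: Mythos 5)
Your argument is correct. The paper itself does not prove Lemma~\ref{lem:gammaconnections} --- it is cited without proof from \cite{Gustafsson-2019} --- so there is no in-paper proof to compare against, but your derivation is the natural one and it checks out: the coordinate invariance of $G$ forces $\tilde H(\tilde z,\tilde w)-H(z,w)=\re\log\frac{\tilde z-\tilde w}{z-w}$, the right-hand side is the real part of a function holomorphic in $z$ and in $w$ separately near the diagonal, and applying $1$, $\partial_z$, $\partial_z\partial_w$, $\partial_z\partial_{\bar z}$ at $z=w$ reproduces (\ref{h0})--(\ref{h11}) with exactly the prefactors $\re$, $\tfrac12$, $\tfrac16$, $1$ once the limit definitions of $\{\cdot,\cdot\}_k$ and the normalizations in (\ref{greentaylor1}), (\ref{Hh2}), (\ref{Hlambda}) are matched. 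I verified the factor bookkeeping you flagged as the main hazard (e.g.\ $\tfrac12\cdot\tfrac12\{\tilde w,w\}_1=\tfrac14\{\tilde w,w\}_1$ against $\tfrac12\tilde h_1\varphi'$, and $\tfrac12\cdot\tfrac16\{\tilde w,w\}_2$ doubling to $\tfrac16$), and it is consistent throughout.
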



\bibliography{bibliography_gbjorn.bib}

\end{document}